\def\@fpheader{\relax}
\newcommand{\D}{\Delta}
\newcommand{\Dp}{\Delta^*}
\newcommand\be{\begin{equation}}
	\newcommand\ee{\end{equation}}
\newcommand\bea{\begin{eqnarray}}
	\newcommand\eea{\end{eqnarray}}
\newcommand\ba{\begin{array}}
	\newcommand\ea{\end{array}}
\newcommand\nn{\nonumber}
\newcommand\bc{\begin{center}}
	\newcommand\ec{\end{center}}
\newcommand\pa{\partial}
\newcommand\comment[1]{}
\renewcommand\tilde{\widetilde}
\newcommand{\Log}{\mathrm{Log}}
\newcommand{\RLC}{\check{C}^+_z}
\newcommand{\rom}[1]{\uppercase\expandafter{\romannumeral #1\relax}}
\newcounter{ex}
\newtheorem{proposition}{Proposition}
\theoremstyle{remark}
\newtheorem{remark}{Remark}
\newtheorem{example}[ex]{Example}{\bf}{\rm}
\numberwithin{theorem}{section}
\numberwithin{proposition}{section}
\numberwithin{remark}{section}
\numberwithin{equation}{section}
\numberwithin{ex}{section}
\newcommand\Z{\mathbb{Z}}
\newcommand\R{\mathbb{R}}
\newcommand\C{\mathbb{C}}
\newcommand\CP{\mathbb{P}}
\title{Tropical Periods for Calabi-Yau Hypersurfaces in non--Fano Toric Varieties}
\author[a]{Per Berglund}
\author[a]{\!\!, Michael Lathwood}
\affiliation[\,a]{Department of Physics and Astronomy, University of New Hampshire, Durham, NH 03824, USA}
\emailAdd{Per.Berglund@unh.edu}
\emailAdd{Michael.Lathwood@unh.edu}
\abstract{
We consider multi-polytopes to describe non-Fano toric varieties and their associated anticanonical Calabi-Yau hypersurfaces. From the periods of the mirror manifold  
%
%
%
the $\widehat{\Gamma}$-conjecture is shown to hold for examples of Calabi-Yau hypersurfaces in non-Fano  ambient spaces, extending  earlier work by Abouzaid et al by employing  a  generalized Duistermaat-Heckman measure.}
\begin{document}
	
	\maketitle
	\parskip=10pt
	
	\section{Introduction and results}\label{sec:intro}
	Calabi-Yau 3-folds play an essential role in string compactifications
 \cite{Candelas1985}. 
	There exists a number of ways to construct these Ricci-flat K\"{a}hler manifolds.
	 One such class is obtained by considering the anticanonical divisor in a toric variety using the data encoded in a 4-dimensional reflexive polytope \cite{Bat93}.
	 Since reflexive polytopes in 4 dimensions are completely classified \cite{KS}, it is of interest to look beyond the reflexive case.
	 We proceed in this direction by presenting a new way to calculate period integrals for Calabi-Yau hypersurfaces in non-Fano toric varieties using the generalized Duistermaat–Heckman measure. 
	 We first review intersection theory and mirror symmetry  for examples of Calabi-Yau hypersurfaces $X$ in toric varieties $\mathcal{Y}_\D$ in Sections \ref{sec:intersec} and \ref{sec:MS}, respectively.
	 In particular, we  use the $m$-twisted Hirzebruch $(n+1)$-folds $\mathcal{F}^{(n+1)}_m$ for $n=1,2,3$ as our running examples.
     Mirror symmetry allows us to use the mirror manifold $\check{X}$ to compute the period of a mirror cycle $\RLC$ in terms of the intersection data of $\mathcal{Y}_\D$.
	 In Section \ref{sec:gDH} we  turn to our main tool for calculating this period: the generalized Duistermaat-Heckman measure $\overline{DH}_{\D,\xi}$.
	 The generalized Duistermaat-Heckman measure is a measure on $\R^{n+1}$ that is  used to calculate the volume of multi-polytopes \cite{HattoriMasuda}, relating it to 
  the symplectic volume of $\mathcal{Y}_\D$ \cite{DHbook}.
	 Following \cite{Nis06}, in Section \ref{sec:gDH1} we construct the generalized Duistermaat-Heckman measure in terms of the intersection of the dual cones $\overline{U}(I)^+$ in $\R^2$.
	 For the higher dimensional case,   we introduce a graded ring $H_*^\text{trop}(\Sigma)$ which allows for a completely algebraic representation of the generalized Duistermaat-Heckman measure, see Section \ref{sec:gDH2}.
	 
	 Using our results, in Section \ref{sec:gDH3} we  compute the period of a cycle $\RLC\hookrightarrow\check{X}$ in the mirror to $X$ using the intersection data of $\mathcal{Y}_\D$ and the Duistermaat-Heckman theorem. This is the content of Proposition \ref{prop:K3period} for $n=2$ and in Proposition \ref{prop:CY3Period} for $n=3$.
	 Abouzaid-Ganatra-Iritani-Sheridan \cite{Sheridan_etal} first proved a period formula of this type in any number of dimensions using a tropical decomposition of the cycle $\RLC$.
	  In this paper, we analyze new cases of K3s and Calabi-Yau 3-folds using an appropriate generalization of their calculation.
      Our calculation allows for multiple complex structure moduli $(z_1,\dots,z_s)$ and toric ambient spaces that are defined by multi-polytopes, which are exactly the generalizations we need for $\mathcal{F}^{(n+1)}_m$.
      We then get explicit formulas in terms of the intersection data of the ambient toric variety $\mathcal{Y}_\D$.
	  
	  Lastly in Section \ref{sec:EIT}, we recalculate the Euler characteristic term of the period via ``error in tropicalization". This method, also inspired by Abouzaid-Ganatra-Iritani-Sheridan, computes the difference between the amoeba of $\RLC$ and $\partial\D$, which is the unique compact connected component of the tropical amoeba of $\RLC$. We obtain new contributions to the period due to properties of the Newton multi-polytopes for our non-Fano examples. Such contributions are novel to these cases and lead us to conjecture the existence of tropical varieties with extension regions $\D_\text{ext}$ as in \cite{BH16}.

   After this article was submitted to the arXiv, it was brought to our attention by Helge Ruddat that there are other methods of computing periods tropically \cite{ruddat2019period}. 
   His work with Bernd Siebert outlines how one can construct tropical cycles $\beta_\text{trop}$ that lift to cycles in a toric degeneration $\mathfrak{X}$.
   The period of this lifted cycle is then a monomial if written in canonical coordinates of the mirror complex structure moduli space.
   One can then relate the antiderivative of this period to the mirror superpotential $W$ \cite{GRZ}.
   However, this procedure has not been explicitly carried out for non-Fano toric varieites.
   We have carried out a similar analysis in these interesting cases which will be the subject of a forthcoming work \cite{wip}.
	
	\section{Intersection theory for Calabi-Yau hypersurfaces in toric varieties}
		\label{sec:intersec}
		As previously stated, a very large class of Calabi-Yau manifolds is given by hypersurfaces in toric varieties. We define the toric ambient spaces now and explain how to compute topological quantities associated to them using polytope data. 
		In particular, we are interested in constructing the anticanonical hypersurface $X$ in a toric variety $\mathcal{Y}_\D$ using its divisor data.
		We then explain how to calculate the Chern class $c(X)$ of $X$ and the Euler characteristic $\chi$.
		The following material is standard, but two good references are \cite{CLS,Fulton}.
        A good reference for this material in the context of string theory is \cite{Denef}.
	\subsection{Toric varieties}
	 Let $M\cong\Z^{n+1}$ be an $(n+1)$-dimensional lattice and let $N=\text{Hom}(M,\Z)\cong\Z^{n+1}$ be the dual lattice. Suppose we are given a lattice polytope 
	\begin{equation}
		\Delta^* \subset N_\R
	\end{equation}
	where $N_\R=N\otimes\R\cong \R^{n+1}$. That is, $\Delta^*$ is a real $(n+1)$-dimensional polyhedron with $n$-dimensional faces (facets). Note that we do not assume that $\Delta^*$ is reflexive. In practice, one first specifies the vertices of $\Delta^*$, which are given by points $v_\rho \in N$. The one-dimensional rays in the normal fan $\Sigma^{(1)}\subset\Delta^*$ of the dual polytope
	\begin{equation}
		\label{polar}
		(\Delta^*)^\circ=\{\mu\in M_\R \,|\, \mu\cdot\nu\ge-1 \,\,\, \forall \nu\in\Delta^*\}\subset M_\R
	\end{equation} 
	consists of the rays that go through the vertices $v_\rho\in\Delta^*$. When $\Delta^*$ is convex, $(\Delta^*)^\circ=\Delta$ is called the \textit{Newton polytope}. In other words, $v_\rho\in\Delta^*$ are the inward pointing normal vectors of the faces of $\Delta$. One can use $\Sigma^{(1)}$  to describe the $(n+1)$-dimensional torus $T$ orbits in an $(n+1)$-dimensional toric variety $\mathcal{Y}_\Delta$. Further, each $v_\rho\in\Sigma^{(1)}$ corresponds to a $T$ invariant divisor $D_\rho \subset \mathcal{Y}_\Delta$, meaning a $n$-dimensional (codimension 1) subspace of $\mathcal{Y}_\Delta$. There is a scheme-theoretic construction that sets $\mathcal{Y}_\Delta=\text{Proj} \, S_\Delta$ where $S_\Delta$ is the polytope ring, but here we are just concerned with the divisor data of the toric variety, which is encoded in the \textit{spanning polytope} $\Delta^*$.
	
	\begin{example}[$\mathcal{Y}_\D=\mathcal{F}^{(3)}_m$]
		\label{ex:Hirzebruch3d}
		Let $n+1=3$. The fan
		\begin{equation}
			\Sigma^{(1)}=\{(-1,-1,0),(1,0,0),(0,1,0),(0,0,1),(-m,-m,-1)\}=\{v_\rho\}_{\rho=1}^5
		\end{equation}
		describes $\mathcal{Y}_\Delta=\mathcal{F}^{(3)}_m$, the  \textit{$m$-twisted Hirzebruch 3-fold}, which is a $\mathbb{P}^2$ fibration over $\mathbb{P}^1$. The first two coordinates of the $v_\rho$ correspond to the $\mathbb{P}^2$ and the last coordinate corresponds to the $\mathbb{P}^1$. Since $\Sigma^{(1)}_{\mathbb{P}^2}=\{(1,0),(0,1),(-1,-1)\}$ and $\Sigma^{(1)}_{\mathbb{P}^1}=\{1,-1\}$, we can interpret $m$ parameter in $v_4$ as controlling the ``twisting" of the bundle. Essentially, $m$ is changing the relationship between the fibers and the base. From the plot of $\Delta^*$ for $m=3$ in Figure \ref{F^3_3 spanning polytope}, we  see that the spanning polytope becomes non-convex for $m>2$. If we want to construct the Newton polytope $\Delta$, we have to use the construction given in \cite{BH16}. But we  soon see how to construct $\Delta$ using the generalized Duistermaat-Heckman measure.
		\begin{figure}[h!]
			\begin{center}
				\includegraphics[scale=0.4]{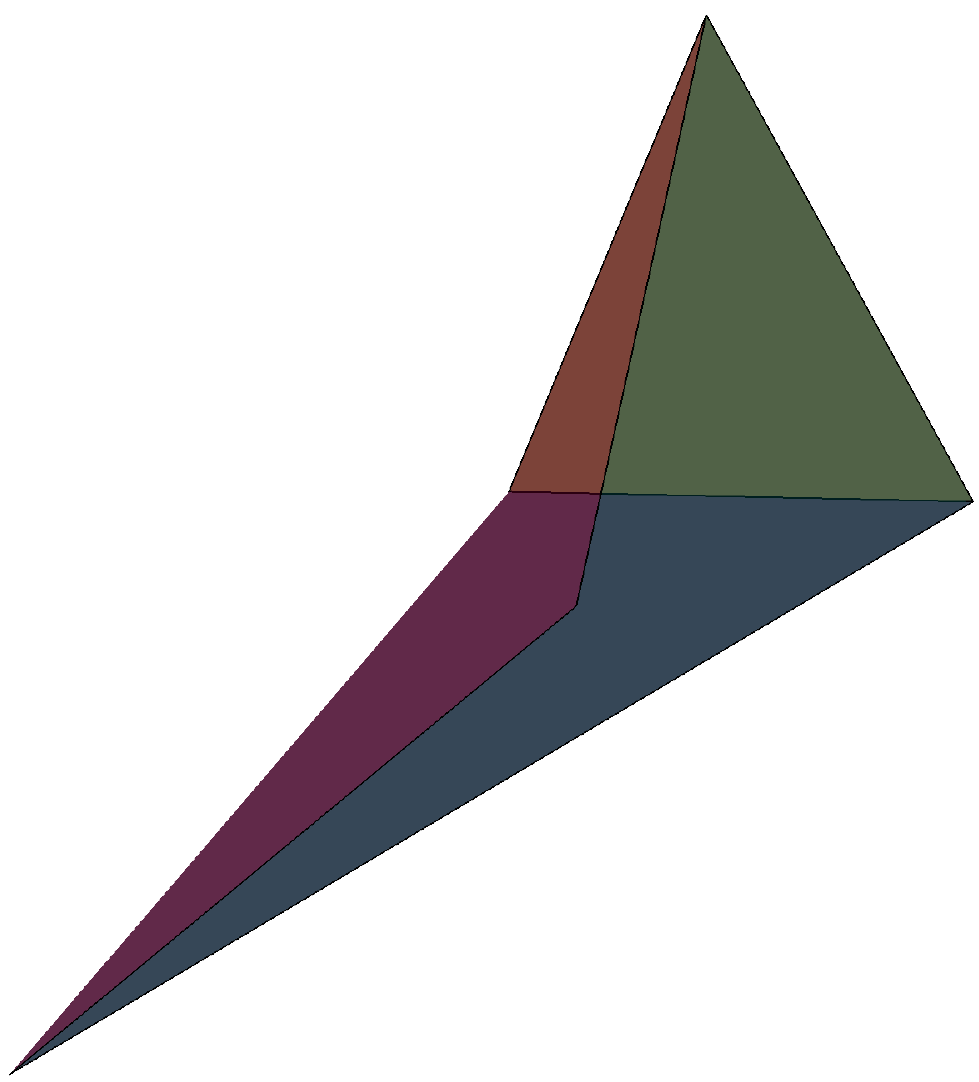}
			\end{center}
			\caption{The 3-dimensional spanning polytope $\Delta^*$ for the 3rd Hirzebruch 3-fold. The ``VEX point" $(-1,-1,0)$ prevents this polytope from being convex when the bundle is twisted to this extent \cite{BH16}.}
			\label{F^3_3 spanning polytope}
		\end{figure}
		\par We can create a ``cone" over $\Delta^*$ by setting $\bar{v}_\rho=(v_\rho,1)$ and $\bar{v}_0=(0,\dots,0,1)$. The linear relations between the $D_\rho$ in the Chow ring\footnote{The Chow ring $A_*(X)$ is a graded ring whose elements are divisor classes modulo linear equivalence and the grading is by dimension. The multiplication on $A_*(X)$ is given by the intersection product, which we define in Equation \ref{intprod}.} $A_{n}(\mathcal{Y}_\Delta)$ can be deduced from additive identities of the $\bar{v}_\rho$. Consider the following additive identities of the $\bar{v}_\rho$.
		\begin{align}
			\label{linrelate}
			\bar{v}_1+\bar{v}_2+\bar{v}_3-3\bar{v}_0=0 \\
			-m\bar{v}_1+\bar{v}_4+\bar{v}_5-(2-m)\bar{v}_0=0
		\end{align}
		The corresponding divisors then satisfy the following identities.
		\begin{equation}
			D_1=D_2-mD_4\,\, , \,\, D_2=D_3 \,\, , \,\, D_4=D_5
		\end{equation}
		Thus the Chow ring is generated as a graded ring by two linearly independent divisor classes 
		\begin{equation}
			A_*(\mathcal{F}^{(3)}_m)=\langle D_2, D_4 \rangle
		\end{equation}
	From now on we write $\{D_1, D_2\}$ as our basis of linearly independent divisor classes, where $D_1$ is the class corresponding to the $\CP^2$ fiber and $D_2$ is the class corresponding to the $\CP^1$ base.
	\end{example}
	So far our discussion of toric varieties has been without reference to string theory, but toric varieties can also be realized as the classical supersymmetric ground states of a gauged linear sigma model (GLSM), modulo the $U(1)^s$ symmetry \cite{Wit93}. If $x_i$ are the scalar components of $k$ chiral superfields with charges $Q_i^a$ under a $U(1)^s$ gauge group, then the potential of the corresponding GLSM is given by
	\begin{equation}
		V(x)=\sum_{a=1}^s\frac{e_a^2}{2}\left(\sum_{i=1}^kQ_i^a|x_i|^2-\xi^a\right)^2,
	\end{equation}
	where the $e_a$ are the $U(1)^s$ coupling constants, and $\xi^a$ are the Fayet-Iliopoulos (FI) parameters. We take the zeros of $V$ modulo the gauge group action (called the D-flat configurations)
	\begin{equation}
		\mathcal{M}=\{x\in\C^{k} \,|\, V(x)=0\}/U(1)^s
	\end{equation}
	where the $U(1)^s$ acts according to the charge matrix
	\begin{equation}
		x_j\mapsto e^{iQ_j^a\varphi_a}x_j.
	\end{equation}
	Then, $\mathcal{M}$ is a $d$-dimensional toric variety with $d:= n+1=k-s$. A toric variety constructed this way also inherits a symplectic form from $\C^k$
	\begin{equation}
		\omega=\frac{i}{2\pi}\sum_j dx^j\wedge d\bar{x}^j \in H^{1,1}(\mathcal{M}).
	\end{equation}
	One can construct a basis $\{C^a\}\subset A_1(\mathcal{M})$ of 2-cycles by taking $d-1$ intersections of the divisors $D_\rho$. In this GLSM construction, these divisors are simply
	\begin{equation}
		D_\rho=\{x_\rho=0\}\in H_{d-1}(\mathcal{M})
	\end{equation}
	With our symplectic form and a basis of 2-cycles, define the \textit{K\"{a}hler moduli} $t_a$
	\begin{equation}
		t_a=\int_{C^a} \, \omega.
	\end{equation}
	These parameters control the ``size" of the toric variety, since they essentially give the length of curves. It follows that the dimension of the K\"{a}hler moduli space $\mathscr{M}_\omega$, i.e. the number of different K\"{a}hler parameters, is the number of $U(1)$ factors $s$ in the GLSM construction.
	\par We gave the GLSM construction here to be utilized in the next example to compute the \textit{intersection product} of $d$ divisors, which can be defined as
	\begin{equation}
		\label{intprod}
		D_{\rho_1}\dots D_{\rho_d}=\int_\mathcal{M} D^*_{\rho_1}\wedge\dots\wedge D^*_{\rho_d}.
	\end{equation}
	Here $D^*_{\rho_i}$ is a $(1,1)$-form that is Poincar\'{e} dual\footnote{The precise definition of this duality is not important here, we just need to know there is a way to associate a $(d-1)$-dimensional divisor with $(1,1)$-form} to $D_{\rho_i}$. It can be shown that the intersection product  $D_{\rho_1}\dots D_{\rho_d}$ is equal to the number of points where all of the $D_{\rho_i}$ intersect. Intersection products of less than $d$ divisors are defined similarly, but now they represent higher dimensional subspaces rather than a number of points. 
	\begin{remark}
			It should be noted that $\mathcal{M}=\mathcal{Y}_\D$ when the rows of the charge matrix $Q_i^a$ are generators of the Mori cone $\ell^{a}$ for $\mathcal{Y}_\D$.
			This will be discussed more in Section \ref{ClosedMS}.
	\end{remark}
	\begin{example}[$\mathcal{F}^{(3)}_m$ GLSM]
		\label{GLSMintersect}
		The Hirzebruch 3-fold $\mathcal{F}^{(3)}_m$ is given by the classical supersymmetric ground states of the GLSM with five chiral superfields and the following $U(1)\times U(1)$ charge matrix
		\begin{equation}
			\begin{bmatrix}
				Q^ 1_i \\
				Q^2_i
			\end{bmatrix} 
			= 
			\left[\begin{tabular}{c c c c c}
				1 &1& 1& 0& 0 \\
				$-m$ & 0 &0 &1 &1 
			\end{tabular} \right]
		\end{equation}
		The D-flatness condition tells us
		\begin{equation}
			\mathcal{F}^{(3)}_m=\bigg\{x\in\C^5 \bigg| \begin{matrix}
				|x_1|^2+|x_2|^2+|x_3|^2=\xi_1 \\
				-m|x_1|^2+|x_4|^2+|x_5|^2=\xi_2	\end{matrix}\bigg\}\bigg/U(1)^2
		\end{equation}
		Since $\xi_a>0$, we have that $(0,0,0,x_4,x_5)\notin\mathcal{F}^{(3)}_m$ and $(x_1,x_2,x_3,0,0)\notin\mathcal{F}^{(3)}_m$, giving us the intersection numbers
		\begin{equation}
			D_1D_2D_3=0 \,\, , \,\, D_4D_5=0.
		\end{equation}
		These relations generate the Stanley-Reisner ideal $\mathcal{I} \subset A_*(\mathcal{F}^{(3)}_m)$. Together with the linear equivalence relations between the $D_\rho$, the relations that generate $\mathcal{I}$ allow us to compute the nontrivial intersection numbers. For example,
		\begin{equation*}
			D_1D_2D_3=D_1^2(D_1-mD_2)=0 \implies D_1^3=m.
		\end{equation*}
		Below we list all the independent triple intersection numbers for later use.
		\begin{equation}
			D_1^3=m \,\, , \,\, D_1^2D_2=1 \,\, , \,\, D_1D_2^2=0 \,\, , \,\, D_2^3=0
		\end{equation}
		The triple intersection numbers can also be written in matrix form.
		\begin{equation}
			C_{1jk}=\begin{bmatrix}
				m & 1 \\
				1 & 0
			\end{bmatrix}
			\,\,\, , \,\,\, 	C_{2jk}=\begin{bmatrix}
				1 & 0 \\
				0 & 0
			\end{bmatrix}
		\end{equation}
		We can see these matrices are symmetric, and we don't have to list the other indices since they  be the same for linearly equivalent divisors. 
	\end{example}
	\subsection{Calabi-Yau hypersurfaces}
	The anticanonical divisor $D_0=-K_{\mathcal{Y}_\Delta}$ of the toric variety $\mathcal{Y}_\Delta$ is given by the sum of the toric divisors
	\begin{equation}
		D_0=\sum_{\rho\in\Sigma^{(1)}} D_\rho
	\end{equation}
	By the adjunction formula, the canonical class of $D_0$ is $0$
	\begin{equation}
		K_{D_0}=(K_{\mathcal{Y}_\Delta} + D_0 )\big|_{D_0}=K_{\mathcal{Y}_\Delta} - K_{\mathcal{Y}_\Delta} =0, 
	\end{equation}
	so an anticanonical section is a Calabi-Yau $n$-fold $X$, since having trivial canonical class is equivalent to having $c_1=0$. The geometric phase of a GLSM with D-flat configurations given by $\mathcal{Y}_\Delta$ is a nonlinear sigma model on $X$ \cite{Denef}. This can be obtained from the GLSM by including Witten's $P$ field and a superpotential $W$. The charges of the $P$ field can be read off from the coefficients of $\bar{v}_0$ in Equation \ref{linrelate}, and we  discuss the superpotential in the next section. We say $X$ is the \textit{anticanonical hypersurface} in $\mathcal{Y}_\D$.  We can compute quantities of interest on $X$ by pulling back to the ambient space.  In practice this is done by inserting a factor of $c_1(\mathcal{Y}_\Delta)$.
	\par For a toric variety $\mathcal{Y}_\Delta$, the total Chern class
	\begin{equation}
		c=1+c_1+\dots+c_d \in H^0(\mathcal{Y}_\Delta)\oplus H^2(\mathcal{Y}_\Delta)\oplus\dots\oplus H^{2d}(\mathcal{Y}_\Delta)
	\end{equation}
	is given by
	\begin{equation}
		\label{ChernToric}
		c(\mathcal{Y}_\Delta):= c(T\mathcal{Y}_\Delta)=\prod_{\rho\in\Sigma^{(1)}}(1+D_\rho),
	\end{equation}
	Here the $D_\rho$ are the cohomology classes Poincar\'{e} dual to the toric divisor homology classes. Since we are taking a product of these classes, it should come as no surprise that $c$ can be computed with intersection theory.

	\begin{example}[$c(X)$ for anti-canonical hypersurface in $\mathcal{F}^{(3)}_m$]
		\label{ChernK3}
		We calculate the Chern class of a K3 $X\hookrightarrow\mathcal{F}^{(3)}_m$ using the adjunction formula and the intersection properties of the toric divisors. We can use Equation \ref{ChernToric} and the linear equivalences of the toric divisors that we derived in Example \ref{ex:Hirzebruch3d} to write
		\begin{equation*}
			c(\mathcal{F}^{(3)}_m)=(1+D_1)^2(1+D_1-mD_2)(1+D_2)^2.
		\end{equation*}
		The normal bundle to the K3 is in the same divisor class as $D_0$, so we get 
		\begin{equation*}
			c(NX)=1+D_0=1+3D_1+(2-m)D_2
		\end{equation*}
		We have a short exact sequence of vector bundles
		\begin{equation}
			0\rightarrow TX \rightarrow T\mathcal{F}^{(3)}_m \rightarrow NX \rightarrow 0
		\end{equation}
		and thus the adjunction formula gives
		\begin{align}
			c(X)&=\frac{c(\mathcal{F}^{(3)}_m)}{c(NX)}\bigg|_X=\frac{(1+D_1)^2(1+D_1-mD_2)(1+D_2)^2}{1+3D_1+(2-m)D_2}\bigg|_X \nonumber \\ 
			&=1+2(3-m)D_1D_2+3D_1^2.
		\end{align}
		To compute this quotient, we Taylor expanded the denominator and dropped terms of degree higher than the complex dimension of $X$. The Euler characteristic of $X$ is given by the integral of the top degree Chern class over $X$. We can pull this computation back to the ambient space by inserting the anticanonical divisor, which by Equation \ref{ChernToric} is the first Chern class of the ambient space.
		\begin{align}
			\label{HirzK3chi}
			\chi(X)&=\int_X c_2(X)=\int_{\mathcal{F}^{(3)}_m} c_1(\mathcal{F}^{(3)}_m)c_2(X) \\
			& =3(6-2m)D_1^2D_4+3(2-m)D_1^2D_4+9D_1^3=24 \nonumber
		\end{align}
		Here we used the intersection numbers derived in Example \ref{GLSMintersect}. This agrees with what we would expect for a K3, which always have $\chi=24$.
	\end{example}

	\section{Mirror symmetry for Calabi-Yau hypersurfaces in toric varieties}
	\label{sec:MS}
		Mirror symmetry asserts that given a Calabi-Yau manifold $X$, there exists a mirror Calabi-Yau manifold $\check{X}$ that yields an equivalent string compactification \cite{GP90,CdlOGP}.
        For a review, see \cite{MS1,MS2}.
		When $\mathcal{Y}_\D$ is semi-Fano, one can construct the mirror to the anticanonical Calabi-Yau hypersurface $X$ from the Landau-Ginzburg superpotential\footnote{We write the exponent vector as $Y^{v_\rho}=Y_1^{v_\rho^1}\dots Y_{n+1}^{v_\rho^{n+1}}$ and include complex coefficients $a_\rho$ which are arbitrary at the moment} $W:(\C^*)^{n+1}\rightarrow\C$ following the Hori-Vafa construction \cite{HoriVafa}
		\begin{align}
			\label{SupPot}
			W(Y_1,\dots,Y_{n+1})=\sum_{\rho\in\Sigma^{(1)}}a_\rho Y^{v_\rho}.
		\end{align}
		In this case, $\check{X}=W^{-1}(0)$. 
		We say that $W$ is the mirror Landau-Ginzburg model to $\mathcal{Y}_\D$, and $W$ can also be used to study mirror symmetry of $X$. 
		Mathematically, we view $W$ as a symplectic fibration over $\C$, and the critical locus of $W$ can be used to determine the quantum cohomology ring $QH(\mathcal{Y}_\D)$ of the ambient toric variety  \cite{FOOO}.
		In particular, for Fano  $\mathcal{Y}_\D$ the quantum cohomology ring is isomorphic to the Jacobian ring of the mirror superpotential.
		\begin{align}
			QH(\mathcal{Y}_\D)\cong\text{Jac}(W)=\C[Y_1,\dots,Y_{n+1}]\bigg/\bigg\langle\frac{\partial W}{\partial Y_i}\bigg\rangle
		\end{align}
		Therefore mirror symmetry of the anticanonical divisor is deeply related to mirror symmetry of the ambient toric variety itself \footnote{The work presented here is for the mirror symmetry of Calabi-Yau pairs, but subsequent work  focus on extending Fano/LG mirror symmetry to the non-Fano cases discussed in the previous section \cite{wip}.}.
		\par There have been studies of mirror superpotentials associated to non-Fano toric varieties \cite{AKO,Aur09,CPS,FOOO}, yet the physical implications of such Landau-Ginzburg models are not fully understood.
		In particular, there are quantum effects due to string worldsheets wrapping cycles in $X$ which are captured by the Gromov-Witten invariants of the space.
		The work presented here, like much of the previous work, is done in the large radius limit of $\mathscr{M}_\omega$ where the worldsheet instanton effects are negligible.
		However, mirror symmetry via period integrals presented in this section is a first step in computing the big quantum cohomology \cite{Iritani11} for examples of non-Fano spaces, and this would give a more complete picture of mirror symmetry in other regions of the moduli space.
		The big quantum cohomology ring of $\mathcal{Y}_\D$ differs from $QH(\mathcal{Y}_\D)$ in that it includes gravitational instanton effects in the ring structure.
		Pioneering work by Fukaya-Oh-Ohta-Ono \cite{FOOO} would suggest there are infinitely many corrections to the superpotential that would be necessary to compute the big quantum cohomology ring.
		However, at least in two dimensions, Auroux has shown that there are only finitely many corrections to $W$ for non-Fano $\mathcal{F}^{(2)}_m$ by preforming an explicit deformation \cite{Aur09}, and his result was reproduced using tropical geometry \cite{CPS}. 
		Such a deformation is possible thanks to a periodicity in the sequence of Hirzebruch $(n+1)$-folds: \be\mathcal{F}^{(n+1)}_m\cong \mathcal{F}^{(n+1)}_{m\, \text{mod} \,(n+1)}\ee
		where the $\cong$ means a diffeomorphism. Although we have this periodicity as real manifolds, it is not known if these diffeomorphic manifolds can be distinguished in quantum cohomology. In this section, we discuss two different perspectives on mirror symmetry with the examples of $\mathcal{F}^{(n+1)}_m$ in mind.
	\subsection{Open-string mirror symmetry}
		Open-string mirror symmetry for compact Calabi-Yau hypersurfaces can be formulated as a quasi-equivalence of $A_\infty$ categories. 
		Following Kontsevich's original proposal \cite{Kon94}, one defines two $A_\infty$ categories $\mathsf{D}_A(X)$ and $\mathsf{D}_B(\check{X})$ to serve as the category of boundary conditions (D-branes) in topologically twisted 2d $N=(2,2)$ superconformal field theories \cite{HIV00}.
		In particular, $\mathsf{D}_A(X)$ is the category of boundary conditions for the $A$-model on $X$ and $\mathsf{D}_B(\check{X})$ is the category of boundary conditions in the $B$-model on $\check{X}$.
		Lagrangian submanifolds $L\in\mathsf{D}_A(X)$ in $X$ serve as the $A$-branes and coherent sheaves $\mathscr{F}\in\mathsf{D}_B(\check{X})$ on $\check{X}$ serve as the $B$-branes \cite{MS2}.
		This means that these objects preserve supersymmetry generated by
		 \begin{align}
		 		 Q_A=\overline{Q}_++Q_- \,\,\,\, \text{and} \,\,\,\, Q_B=\overline{Q}_++\overline{Q}_-
		 \end{align}
		 respectively, where $Q_\pm,\overline{Q}_\pm$ are the supercharges of the 2d $N=(2,2)$ SCFT. 
		 In other words, the physical observables are $Q_A$ or $Q_B$ cohomology classes, which are in one-to-one correspondence with the supersymmetric ground states of the ``open string" boundary theory.
		The cohomological field theory is also manifested in the categorical formulation:
		for any pair D-branes $L, L'\in\mathsf{D}_A(X)$ or $\mathscr{F}, \mathscr{F}'\in\mathsf{D}_B(\check{X})$, the morphism spaces
		 \begin{align}
			\mathsf{D}_A(X)(L,L') \,\,\,\, \text{and} \,\,\,\, \mathsf{D}_B(\check{X})(\mathscr{F},\mathscr{F}')
		\end{align}
		are given by the Hilbert spaces of open string states stretching between the respective D-branes.
			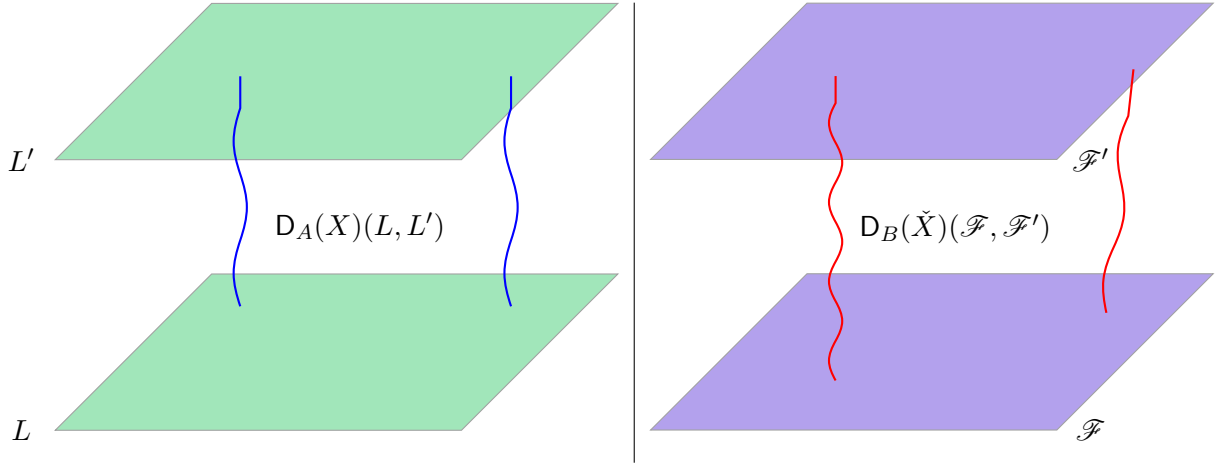
\begin{figure}[h!]
				\begin{center}
					\begin{tabular}{c |c}
							\begin{tikzpicture}[scale=0.9]
							\node at (-3.5,-2,3) {$L$};
							\node at (-3.5,2,3) {$L'$};
							\node at (1.5,1,3) {$\mathsf{D}_A(X)(L,L')$};
							\draw[fill=green,opacity=0.3] (-3,-2,-3) -- (-3,-2,3) -- (3,-2,3) -- (3,-2,-3) -- cycle;
							\draw[fill=blue,opacity=0.1] (-3,-2,-3) -- (-3,-2,3) -- (3,-2,3) -- (3,-2,-3) -- cycle;
							\draw[fill=green,opacity=0.3] (-3,2,-3) -- (-3,2,3) -- (3,2,3) -- (3,2,-3) -- cycle;
							\draw[fill=blue,opacity=0.1] (-3,2,-3) -- (-3,2,3) -- (3,2,3) -- (3,2,-3) -- cycle;
							\draw [thick,blue,snake=coil,segment aspect=0,segment length=50pt] (-2,-1.9,-1.5) -- (-2,1.5,-1.5);
							\draw [thick,blue,snake=coil,segment aspect=0,segment length=50pt] (2,-1.9,-1.5) -- (2,1.5,-1.5);
						\end{tikzpicture}
					&
						\begin{tikzpicture}[scale=0.9]
						\node at (3.5,-2,3) {$\mathscr{F}$};
						\node at (3.5,2,3) {$\mathscr{F}'$};
						\node at (1.5,1,3) {$\mathsf{D}_B(\check{X})(\mathscr{F},\mathscr{F}')$};
						\draw[fill=red,opacity=0.1] (-3,-2,-3) -- (-3,-2,3) -- (3,-2,3) -- (3,-2,-3) -- cycle;
						\draw[fill=blue,opacity=0.3] (-3,-2,-3) -- (-3,-2,3) -- (3,-2,3) -- (3,-2,-3) -- cycle;
						\draw[fill=red,opacity=0.1] (-3,2,-3) -- (-3,2,3) -- (3,2,3) -- (3,2,-3) -- cycle;
						\draw[fill=blue,opacity=0.3] (-3,2,-3) -- (-3,2,3) -- (3,2,3) -- (3,2,-3) -- cycle;
						\draw [thick,red,snake=coil,segment aspect=0,segment length=30pt] (-2,-3,-1.5) -- (-2,1.5,-1.5);
						\draw [thick,red,snake=coil,segment aspect=0,segment length=50pt] (2,-2,-1.5) -- (2.4,1.6,-1.5);
					\end{tikzpicture}
					\end{tabular}
				\end{center}
				\label{fig:Dbrane}
				\caption{A schematic picture of the objects and morphisms in the categories $\mathsf{D}_A(X)$ and $\mathsf{D}_B(\check{X})$}
			\end{figure}
	\par The categorical version of mirror symmetry described above is called homological mirror symmetry. 
	Closed-string mirror symmetry was first observed in the context of 2d $N=(2,2)$ SCFTs \cite{GP90}, and with this new duality it became more tractable to make predictions about counts of rational curves on the quintic threefold \cite{CdlOGP}.
    These works inspired Kontsevich's proposal, as well as a proof of mirror symmetry \cite{Givental,LLY}.
	We can now see homological mirror symmetry as an equivalence of the symplectic topology of $X$ to the complex geometry of $\check{X}$, and vice versa. This can be made more precise in the language of $A_\infty$ categories,
	\begin{align}
		\label{homological mirror symmetry}
		\mathsf{D}_A(X)\simeq\mathsf{D}_B(\check{X}) \,\,\,\, \text{and} \,\,\,\, \mathsf{D}_B(X)\simeq\mathsf{D}_A(\check{X}).
	\end{align}
	The $\simeq$ means a quasi-equivalence, which is an equivalence at the level of cohomology, so one can see mirror symmetry via SCFTs is the same duality as homological mirror symmetry.
    However, the mirror functor gives a finer picture of mirror duality.
	In particular, we can use the mirror functor to match up numerical invariants of objects associated to the mirror pair $X,\check{X}$.
	For a toric variety $\mathcal{Y}_\D$, one would expect a diagram of categorical quasi-equivalences
	\begin{equation}
		\label{diagram}
		\begin{tikzcd}
			\mathsf{D}_B(\mathcal{Y}_\Delta)\arrow[r,"\text{HMS}"] \arrow[d]& \arrow[l] \mathsf{D}_A(M_{\C^*},W) \arrow[r]  & \mathsf{D}_A(\overline{M}_{\C^*},\overline{W}) \arrow[l] \arrow [d] \\
			\mathsf{D}_B(X) \arrow[rr,"\text{HMS}"] \arrow[u,shift right=1ex,hook'] & &\mathsf{D}_A(\check{X}) \arrow[ll] \arrow[u,shift right=1ex,hook']
		\end{tikzcd}
	\end{equation}
	The bottom line of the diagram is Equation \ref{homological mirror symmetry}, which one would expect should be derivable from the categorical data of the ambient space given in the top line.
	However, the proofs of homological mirror symmetry used to construct this diagram for Calabi-Yau hypersurfaces in toric varieties \cite{ProjHMS,GPHMS} all require the ambient toric variety $\mathcal{Y}_\D$ to be Fano\footnote{A detailed discussion of these proofs is outside the scope of this paper since the $A_\infty$ categories change when we look beyond the hypersurface and consider the entire toric variety $\mathcal{Y}_\D$ or its Landau-Ginzburg mirror $W$ and its compactification $\overline{W}$.}.
	\par From the diagram \ref{diagram}, we  show that a particular numerical invariant of objects in $\mathsf{D}_B(X)$ and $\mathsf{D}_A(\check{X})$ matches up as it should for Calabi-Yau hypersurfaces in non-Fano $\mathcal{Y}_\D$.
	On the $B$-side, we  consider the structure sheaf in the derived category of coherent sheaves $\mathscr{O}_X\in\mathsf{D}_B(X)=D^b\text{Coh}(X)$, and on the $A$-side we  consider the real Lagrangian cycle in the Fukaya category $\check{C}^+_z\in\mathsf{D}_A(\check{X})=\mathcal{F}(\check{X})$.
	Abouzaid-Ganatra-Iritani-Sheridan \cite{Sheridan_etal} showed that 
	\begin{align}
	\label{GammaEqn}
	\int_X \widehat{\Gamma}_X\,\,e^{\sum_a t_a\omega_a}=\int_{\check{C}^+_z} \check{\Omega}_z
	\end{align}
	in the large radius limit of moduli space. On the $A$-side,
	\begin{align}
		\widehat{\Gamma}_X=\prod_i \Gamma(1+\delta_i)=\exp\left(-\gamma c_1 +\sum_{k=2}^n (-1)^{k}\zeta(k)(k-1)!\text{ch}_k(X)\right) \in H^*(X;\R)
	\end{align}
	is the $\widehat{\Gamma}$-class,  $c(TX)=\prod(1+\delta_i)$, $\gamma$ is the Euler–Mascheroni constant, $\zeta(k)=\sum_{n=1}^{\infty}\frac{1}{n^k}$ is the Riemann zeta function, and 
	\begin{align}
		\text{ch}(X)=\sum_{k=0}^n \text{ch}_k(X) = n+c_1+\frac{1}{2}\left(c_1^2-2c_2\right)+\frac{1}{6}(c_1^3-3c_1c_2+3c_3)+\dots
	\end{align}
	are the Chern characters of $X$. We can see this data is sensitive to the K\"{a}hler moduli $t_a$ of $X$, which are the expansion coefficients on a basis of K\"{a}hler forms $\{\omega_a\}\subset H^{1,1}(X)$.
	On the $B$-side, we have the period of the real Lagrangian cycle,
	\begin{align}
		\check{C}^+_z=\check{X}_z\cap\R_+^n\cong S^n.
	\end{align}
	with
	\begin{align}
		\check{\Omega}_z=\frac{\prod_i d\log Y_i}{dW_z(Y)}
	\end{align}
	which is dependent on the complex structure moduli $z=(z_1,\dots,z_s)$.
	 Period integrals are the subject of closed-string mirror symmetry, which we  now discuss in order to show Equation \ref{GammaEqn} still holds for $X$ in non-Fano $\mathcal{Y}_\D$.
	
	\subsection{Closed-string mirror symmetry} \label{ClosedMS}
		We now turn to the closed string sector. First, we review one work that utilizes closed-string mirror symmetry \cite{Klemm} to compute periods of the mirror manifold following Batyrev's construction based on the duality of reflexive polytopes \cite{Bat93}. We then discuss work to generalize this construction for certain multi-polytopes \cite{BH16}. 
	\par A Calabi-Yau $n$-fold $X$ is guaranteed to have a holomorphic top form $\Omega \in H^{n,0}(X)$, which can be used to measure the volume of a half-dimensional cycle $C\in H_n(X;\R)$. This volume is called the cycle's \textit{period} $\pi(C)$.
	\begin{align}
		\pi(C):=\int_C \Omega
	\end{align}
	These period integrals satisfy a system of differential equations called the \textit{Picard-Fuchs equations}. We can write this as
	\begin{align}
		\label{PF}
		L_j\pi_i=0
	\end{align}
	where $L_j$ are the Picard-Fuchs differential operators and $\pi_i$ are the periods ($j\in\{1,\dots,s\}\, , \, i\in\{0,\dots,n\}$).  
	In \cite{Klemm}, the Picard-Fuchs equations were derived for complete intersection Calabi-Yau hypersurfaces using the data encoded in $\Delta^*$ following Batyrev's construction \cite{Bat93}. It was found that periods can be expanded in powers of the complex structure moduli $z_a$ of $\check{X}$. 
	These parameters determine the ``shape" of the space, whereas the K\"{a}hler parameters determine the ``size". 
	Under the mirror map, it was shown that one can also express the period in terms of powers of the K\"{a}hler parameters $t_a$ of the original space $X$. 
	If $\mathscr{M}_\omega(X)$ is the K\"{a}hler moduli space of $X$ and $\mathscr{M}_\text{CS}(\check{X})$ is the complex structure moduli space of $\check{X}$, we have
	\begin{align}
		\text{dim}\,\mathscr{M}_\omega(X) = \text{dim}\, \mathscr{M}_\text{CS}(\check{X})
	\end{align} 
	This was first discovered for $n=3$, where $\text{dim}\,\mathscr{M}_\omega(X)=h^{1,1}(X)$ and  $\text{dim}\, \mathscr{M}_\text{CS}(\check{X})=h^{2,1}(\check{X})$.
	Therefore, there are as many K\"{a}hler moduli of $X$ as there are complex structure moduli of $\check{X}$. 
	In the large radius/large complex structure limit, the $t_a$ and $z_a$ are related via
	\begin{align}
		\label{params}
		t_a=-\log z_a.
	\end{align}
	That is, \ref{params} is valid when all $t_a\rightarrow \infty$ and all $z_a \rightarrow 0$, making instanton corrections to the mirror map negligible. 
	In this limit, \cite{Klemm} shows using the Frobenius method that the coefficients on the $(-\log z_a)^0$ and $(-\log z_a)^n$ terms in $\pi_n$ can be calculated using the classical intersection data in Section \ref{sec:intersec}. In particular, the $(-\log z_a)^n$ term is proportional to the triple intersection numbers $C_{ijk}$ when $n=3$, and the $(-\log z_a)^0$ constant term is proportional to the Euler characteristic $\chi$.
	\par For a single defining equation, we can find all the periods $\pi_i$ by taking an appropriate linear combination of derivatives of the following \textit{fundamental period}
		\begin{align}
			\pi_0=\int_{C_0}\frac{a_0}{W_z}\Omega_0,
		\end{align}
	  where $W_z$ is the mirror superpotential given by Equation \ref{SupPot}, except we use the multiplicative relations between the monomials of $W$ encoded in the Mori vectors $\ell^a$ to express the mirror superpotential in terms of the complex structure moduli rather than the coefficients $a_\rho$. The holomorphic volume form on $(\C^*)^{n+1}$,
	  \begin{align}
	  		\Omega_0=d\log Y_1 \wedge \dots \wedge d\log Y_{n+1}=\frac{dY_1}{Y_1}\wedge\dots\wedge\frac{dY_{n+1}}{Y_{n+1}}
	  \end{align}
  	reduces the integral over the unit sphere
  	\begin{align}
  		C_0=\{Y\in (\C^*)^{n+1} \,\, \big| \,\, |Y_1|=\dots=|Y_{n+1}|=1 \}
  	\end{align}
  to a residue integral.
  The residues are given by the constant terms of the power series expansion of $a_0/W_z$.
  The simplicity of this calculation comes from the complex coordinates $Y_i$ in the denominator of $\Omega$. 
  The constant terms of the power series expansion\footnote{By the constant terms, we mean constant with respect to the complex coordinates $Y_i$} can be proportional to $z_a^k$ for all $k\ge 0$, and by using the multinomial theorem, one can calculate the number $c(k_1,\dots,k_s)$ of terms of the from $z_1^{k_1}\dots z_s^{k_s}$ to be
  \begin{align}
  	\label{cMori}
  	c(k_1,\dots,k_s)=\frac{\left(-\sum_a \ell^a_0 k_a\right)!}{\prod_i\left(\sum_a \ell^a_i k_a \right)!}=\frac{\Gamma\left(-\sum_a \ell^a_0 k_a+1\right)}{\prod_i\Gamma\left(\sum_a \ell^a_i k_a+1 \right)}.
  \end{align}
	The Mori vectors $\ell^a$ span the Mori cone of the mirror $\check{X}$, but there is also an interpretation in terms of the original manifold $X$ \cite{Bat95}.
	Namely, treating the linearly independent divisors classes as formal variables, we can write
	\begin{align}
			c(D_1,\dots,D_s)=\frac{\Gamma\left(-K_{\mathcal{Y}_\D}+1\right)}{\prod_{\rho}\Gamma\left(D_\rho+1 \right)},
	\end{align}
	where $-K_{\mathcal{Y}_\D}$ is the anticanonical divisor of the toric ambient space.
	The product in the denominator is over all the toric divisors, and we must choose a basis to express these in terms of the linearly independent divisors $D_1,\dots,D_s$. 
	The procedure outlined above uses the monomial-divisor mirror map to compute the $\pi_i$.
	That said, we  keep the calculation on the mirror side for now.
	 Expanding around the large complex structure limit point $z=0$ of $\mathscr{M}_\text{CS}(\check{X})$, we obtain
	\begin{align}
		\pi_0(z_1,\dots,z_s)=\sum_{k_1,\dots,k_s \ge 0} c(k_1,\dots,k_s)z_1^{k_1}\dots z_s^{k_s}.
	\end{align}
	The derivatives of the $\Gamma$ functions in $c(k_1,\dots,k_s)$ are what give the $\zeta(k)$ coefficients of the higher periods discussed in Equation \ref{GammaEqn} for $k\le n$.
	The leading order terms of the top period $\pi_n$ and the period of the real Lagrangian cycle are equal.
	That is, in the large radius/large complex structure limit we have
	\begin{align}
		\pi_n=\pi(\check{C}^+_z).
	\end{align}
	However it is not known if the terms which are higher order in the $z_a$ can be determined from $\pi(\check{C}^+_z)$. If this could be done, then one could determine the Picard-Fuchs differential operators $L_j$ in Equation \ref{PF} and extend mirror symmetry to the entire moduli space, not just one particular limit point. 
    There are several problems one runs into by proceeding in this direction, so we illustrate a possible issue with an explicit computation.
	\begin{example}[$\pi_0$ for $\mathcal{F}^{(3)}_m$]
		The spanning polytope for the $m$-twisted Hirzebruch 3-fold was given in Example \ref{ex:Hirzebruch3d}, so we could write down its mirror superpotential $W_z$. 
		But we already have the Mori vectors as given by the rows of the GLSM charge matrix in Example \ref{GLSMintersect},
		 \begin{align}
		 	\ell^a=(Q^a_1,\dots,Q^a_r)
		 \end{align}
		 where $r$ is the number of vertices of $\Dp$.
		 Using Equation \ref{cMori}, we can write down the fundamental period for any $m$.
		\begin{align}
			\label{HirzFunPeriod}
			\pi_0=\sum_{k_1,k_2\ge 0}\frac{\Gamma(3k_1+(2-m)k_2+1)}{\Gamma(k_1-mk_2+1)\Gamma^2(k_1+1)\Gamma^2(k_2+1)}z_1^{k_1}z_2^{k_2}
		\end{align}
	For certain values of $(k_1,k_2)$, there  be poles in the $\Gamma$ functions of the denominator which  not be canceled, leading to non-physical infinities. 
	However, we can treat the $k_i$ as continuous parameters and calculate the top period
	\begin{align}
		\pi_2=\frac{1}{2}K_{ij}\pa_{k_i}\pa_{k_j}\pi_0
	\end{align}
	using the intersection numbers 
	\begin{align}
		K_{ij}=K^*D_iD_j,
	\end{align}
	where $D_i$ are the linearly independent toric divisors and $K^*:=-K_{\mathcal{Y}_\D}$. The leading order terms are given by evaluating the derivatives at $k_1=k_2=0$, so we avoid the poles of $\Gamma$ functions. 
	\end{example}
	
	\par Issues arise in the non-Fano cases, like for $\mathcal{F}^{(3)}_m$ when $m\ge 3$, due to non-reflexive spanning polytopes $\Delta^*$. 
	Instead, one can use the \textit{trans-polar} operation $^\nabla$ in \cite{BH16} which generalizes to certain non-reflexive polytopes like $\Delta^*$ for $\mathcal{F}^{(3)}_3$. 
	With this, one can obtain a Newton polytope $\Delta=(\Delta^*)^\nabla$ such that $\Delta^\nabla=\Delta^*$, so that these non-reflexive polytopes exhibit a duality similar to that of reflexive polytpes under the polar operation, $\D^\circ=\Dp$ and $(\Dp)^\circ=\D$.
	\begin{figure}[h!]
		\begin{center}
			\includegraphics[scale=0.5]{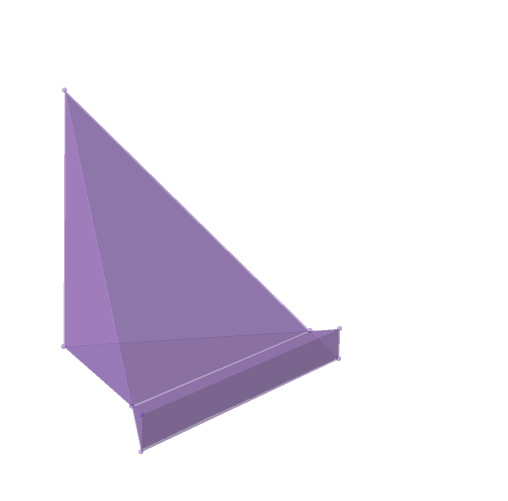}
		\end{center}
		\caption{The Newton polytope $\Delta$ for $\mathcal{F}^{(3)}_3$, constructed as $\Delta=(\Delta^*)^\nabla$}
		\label{3dF3Newton}
	\end{figure}
	In the next section we  present one of the main results of the paper. 
	We  obtain  $\Delta=(\Delta^*)^\nabla$, plotted in Figure \ref{3dF3Newton} for $\mathcal{F}^{(3)}_3$, via an explicit construction using tropical hyperplanes.
	Then, we  compute the leading order terms of $\pi_n$ by computing the period of $\check{C}^+_z$.
	This (closed-string) numerical  invariant gives evidence that the (open-string) homological mirror symmetry results in the previous section can be extended to non-Fano ambient spaces.
	However, for a mirror Landau-Ginzburg model $W$ constructed from a spanning polytope for a non-Fano $\mathcal{Y}_\D$, it is still not known how to modify $\check{X}=W^{-1}(0)$ to get the true mirror to $X$.

\section{Generalized Duistermaat–Heckman measure}
\label{sec:gDH}
We follow the approach of \cite{Nis06} to calculate the generalized Duistermaat–Heckman measure $\overline{DH}_{\D,\xi}$ for our examples of multi-polytopes $\D$.
Namely, we  construct the multi-fan $\Sigma$ for the non-Fano Hirzebruch surface $\mathcal{F}^{(2)}_3$ and the non-Fano Hirzebruch 3-fold $\mathcal{F}^{(3)}_3$, then the DH measure can be written as follows.
\begin{equation}
	\label{gDH}
	\overline{DH}_{\D,\xi}=\sum_{I\in\Sigma^{(n+1)}}(-1)^{|E_I|}w_I\mathbbm{1}_{\overline{U}(I)^+}
\end{equation}
We now unpack this definition to later calculate the period of the real Lagrangian cycle.
\subsection{Multi-fans and multi-polytopes}
\label{sec:gDH1}
Multi-fans and multi-polytopes are generalizations of the fans and polytopes that we have discussed for toric varieties \cite{HattoriMasuda}. 
A multi-fan allows non-trivial winding of the 1-dimensional cones $\Sigma^{(1)}$, and a multi-polytope is a collection of hyperplanes that are dual to a multi-fan.
For our purposes, these generalizations allow top-dimensional cones $I$ to have weights $w_I$ which may be negative due to the winding direction. 
For example, in Figure \ref{fig:F3Multifan} the cone spanned by $\{v_4,v_1\}$ is wound clockwise whereas the other three cones are wound counterclockwise. 
This  be taken into account in Example \ref{MirrorHizmulti-fan}, where the corresponding cone  have negative weight. 
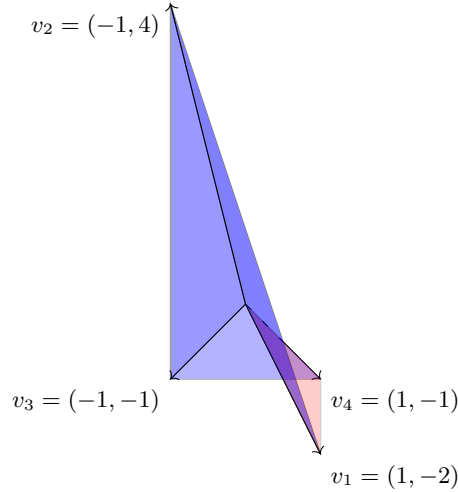
\begin{figure}[h!]
    \centering
    	\begin{tikzpicture}
			\draw[->] (0,0)coordinate(O)--(-1, -1) coordinate(A);
            \draw[loosely dashed] (O) -- (0.5,-0.5)coordinate(I);
			\draw[->] (I) -- (1, -1)coordinate(B);
			\draw[->] (O) -- (1,-2)coordinate(C);
			\draw[->] (O) -- (-1, 4) coordinate(D);
            \draw[below, pos=1, opacity=0.3, fill=blue] (O) -- (A) -- (B) -- cycle;
            \draw[opacity=0.5, fill=blue] (O) -- (D) -- (C) -- cycle;
            \draw[below, pos=2, opacity=0.4, fill=blue] (O) -- (A) -- (D) -- cycle;
            \draw[below, pos=3, opacity=0.2, fill=red] (O) -- (B) -- (C) -- cycle;
			\node[anchor=north east] at (A) {\footnotesize$v_3=(-1,-1)$};
            \node[anchor=north west] at (B) {\footnotesize$v_4=(1,-1)$};
			\node[anchor=north west] at (C) {\footnotesize$v_1=(1,-2)$};
			\node[anchor=north east] at (D) {\footnotesize$v_2=(-1,4)$};
		\end{tikzpicture}
    \caption{The multi-fan $\Sigma=\{\emptyset,\{1\},\{2\},\{3\},\{4\},\{1,2\},\{2,3\},\{3,4\},\{4,1\}\}$ for $\Dp_{\mathcal{F}^{(2)}_3}$. The $\{1,2\}$ cone is ``above" the $\{3,4\}$ cone and the $\{4,1\}$ cone.}
    \label{fig:F3Multifan}
\end{figure}

Let $\Sigma$ be the normal multi-fan of an $(n+1)$-dimensional multi-polytope $\D\subset M_\R\cong\R^{n+1}$.
Similar to the normal fan that we have discussed, the one-dimensional cones $\Sigma^{(1)}$ are generated by the normal vectors $v_i\in N_\R$ to the facets $F_i$ of $\D$.
We write the subscript of the $v_i$ to denote the corresponding cone $\{i\}\in\Sigma^{(1)}$.
Then the $k$-dimensional cones are generated by $k$ of the $v_i$, which we denote by $\{i_1,\dots,i_k\}\in\Sigma^{(k)}$.
The top dimensional cones  be denoted by $I\in\Sigma^{(n+1)}$ and  play a special role as we  sum over these cones to compute the generalized Duistermaat-Heckman measure. 
Each of these cones $I$  be assigned an integer weight $w_I$, which we  take to be $\pm 1$ according to the orientation of the corresponding cone. 
For each $I\in\Sigma^{(n+1)}$, we can define a dual basis $\{u^I_i\}_{i\in I} \subset M_\R$ by imposing 
\begin{equation}
	\langle u^I_i,v_j \rangle = \delta_{ij} \,\,\,\, \forall i,j\in I
\end{equation}
where $\delta_{ij}$ is the Kronecker delta. Pick any $\xi\in N_\R$ such that $\langle u^I_i,\xi \rangle \neq 0$ for any $u_i^I$ and define the subsets
\begin{equation}
	E_I=\{i\in I \,|\,\langle u^I_i,\xi \rangle > 0 \}\subset I
\end{equation}
Finally, define the dual cones 
\begin{equation}
	\overline{U}(I)^+=\bigg\{ u_I+\sum_{ i \in I}r_iu^I_i \, \bigg| \,	\begin{matrix}
		r_i \ge 0 \, ; \, i\in E_I \\
		r_i \le 0 \, ; \, i\notin E_I	\end{matrix} \bigg\} \subset M_\R,
\end{equation}
where $u_I=\cap_{i\in I} F_i$, and the associated characteristic functions of $\mu\in M_\R$
\begin{align}
	\mathbbm{1}_{\overline{U}(I)^+}(\mu)=\begin{cases}
		1 & \mu \in \overline{U}(I)^+ \\
		0 & \mu \notin \overline{U}(I)^+
	\end{cases}.
\end{align}
\par We now illustrate these definitions with three 2-dimensional examples. 
We are most interested in 4-dimensional toric varieties with 3-dimensional Calabi-Yau hypersurfaces, but we first present these cases since they are easier to visualize.
The first example is a Fano toric variety, but then we move on to non-Fano examples.
Essentially we  start with the data of the spanning polytope $\Dp$ and obtain the Newton multi-polytope $\D=(\Dp)^\nabla$, realizing the trans-polar construction through a concrete procedure.
\begin{example}[$\CP^2$ multi-fan]
	\label{P2miltifan}
	For projective space $\CP^n$ (a Fano toric variety), it is not necessary to use the full generality of the multi-fan formalism, but we present it here to show what changes are necessary when moving to a more general case.  Let
	\begin{align}
		v_1&=(1,0)  & v_2&=(0,1) & v_3&=(-1,-1) \\
		I_1&= \{1,2 \} &  I_2&= \{2,3\} & I_3&= \{3,1\}\nonumber\\
		&& \Sigma=\{\emptyset,\{1\},&\{2\},\{3\},I_1,I_2,I_3\} & &
	\end{align}
	The spanning polytope can be constructed as the convex hull
	\begin{equation}
		\Dp=\text{Conv}\{v_1,v_2,v_3\}.
	\end{equation}
	Writing $w_j=w_{I_j}$, we have that
	\begin{equation}
		w_1=w_2=w_3=1,
	\end{equation}
	since each cone has the same orientation. The generalized Duistermaat-Heckman measure can the be written down after calculating the dual bases $\{u^{I_j}_i\}$ to obtain the factors of $(-1)^{|E_{I_j}|}$. Substituting the data for $\CP^2$ into Equation \ref{gDH}, we obtain
	\begin{align}
		\overline{DH}_{\D_{\CP^2},\xi}=\mathbbm{1}_{\overline{U}(I_1)^+}+\mathbbm{1}_{\overline{U}(I_2)^+}-\mathbbm{1}_{\overline{U}(I_3)^+}
	\end{align}
for $\xi=(2,1)$.
\begin{figure}[h!]
	\begin{center}
		\begin{tikzpicture}
			\draw[->,blue] (-1, -1) coordinate(A)-- (-1, 3)coordinate(B);
			\draw[->,blue] (A) -- (4, -1)coordinate(C);
			\draw[->,blue] (2, -1) coordinate(D) -- (C);
			\draw[->,blue] (D) -- (4,-3)coordinate(E);
			\draw[->,red,dashed] (-1,2)coordinate(F) -- (B);
			\draw[->,red,dashed] (F) -- (E);
			\draw[opacity=0.2,fill=red] (F) -- (B) -- (4,3)coordinate(G) -- (E) --cycle;
			\draw[opacity=0.2,fill=blue] (A) -- (B) -- (G) -- (C) --cycle;
			\draw[opacity=0.2,fill=blue] (D) -- (C) -- (E) --cycle;
			\node[anchor=north east] at (A) {\footnotesize$(-1,-1)$};
			\node[anchor=south west] at ([xshift=0.5cm,yshift=0.5cm]A) {\footnotesize$\overline{U}(I_1)^+$};
			\node at (A) {$\bullet$};
			\node[anchor=north east] at (F) {\footnotesize$(-1,2)$};
			\node[anchor=west] at ([xshift=1.5cm,yshift=-0.5cm]F) {\footnotesize$\overline{U}(I_3)^+$};
			\node at (F) {$\bullet$};
			\node[anchor=north east] at (D) {\footnotesize$(2,-1)$};
			\node[anchor=north west] at ([xshift=0.8cm,yshift=-0.4cm]D) {\footnotesize$\overline{U}(I_2)^+$};
			\node at (D) {$\bullet$};
		\end{tikzpicture}
	\end{center}
\caption{The regions $\overline{U}(I_1)^+,\overline{U}(I_2)^+,\overline{U}(I_3)^+\subset M_\R\cong\R^2$ for the multi-fan of $\CP^2$ with top-dimensional cones $\Sigma^{(2)}=\{I_1,I_2,I_3\}$. The positive regions in $\overline{DH}_{\D_{\CP^2},\xi}$ are plotted in blue and the negative in red so that the purple regions are completely canceled.}
\label{P2gDH}
\end{figure}
In Figure \ref{P2gDH}, we can see that the support of the generalized Duistermaat-Heckman measure is exactly the Newton polytope for $\CP^2$. The negative region $\overline{U}(I_3)^+$ completely cancels out $\overline{U}(I_2)^+$, but the positive quadrant given by $\overline{U}(I_1)^+$ is not fully canceled. The remaining part of $\overline{U}(I_1)^+$ is given by
\begin{align}
	\text{supp}\overline{DH}_{\D_{\CP^2},\xi}=\text{Conv}\{(-1,-1),(-1,2),(2,-1)\}=\text{Conv}\{u_{I_j}\}_{j=1}^3=\D_{\CP^2}.
\end{align}
\end{example}
 
 \begin{example}[$\mathcal{F}^{(2)}_3$ multi-fan]
 The spanning polytope for the $m$-twisted Hirzebruch surface $\mathcal{F}^{(2)}_m$ has vertices
 \begin{align}
 	v_1=(-1,0) \,\, , \,\, v_2=(1,0) \,\, , \,\, v_3= (0,1) \,\, , \,\, v_4=(-m,-1)
 \end{align} 
This is plotted for $m=0,1,2,3$ in Figure \ref{fig:HirzSurfPlots}.
 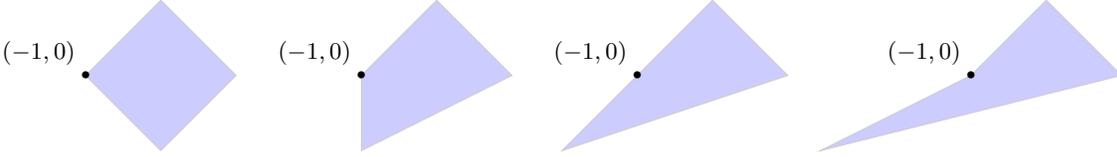
\begin{figure}[h!]
 	\begin{center}
 		\hspace{-1cm}
 		\begin{tabular}{c c c c}
 		\begin{tikzpicture}
 			\draw[opacity=0.2,fill=blue] (-1,0)coordinate(A) -- (0,1)coordinate(B) -- (1,0)coordinate(C) -- (0,-1)coordinate(D) --cycle;
 			\node at (A) {\tiny$\bullet$};
 			\node[anchor=south east] at (A) {\footnotesize$(-1,0)$};
 		\end{tikzpicture}
 	&
 		\begin{tikzpicture}
 		\draw[opacity=0.2,fill=blue] (-1,0)coordinate(A) -- (0,1)coordinate(B) -- (1,0)coordinate(C) -- (-1,-1)coordinate(D) --cycle;
 		\node at (A) {\tiny$\bullet$};
 		\node[anchor=south east] at (A) {\footnotesize$(-1,0)$};
 	\end{tikzpicture}
 &	
 \begin{tikzpicture}
 	\draw[opacity=0.2,fill=blue] (-1,0)coordinate(A) -- (0,1)coordinate(B) -- (1,0)coordinate(C) -- (-2,-1)coordinate(D) --cycle;
 	\node at (A) {\tiny$\bullet$};
 	\node[anchor=south east] at (A) {\footnotesize$(-1,0)$};
 \end{tikzpicture}
&
	\begin{tikzpicture}
	\draw[opacity=0.2,fill=blue] (-1,0)coordinate(A) -- (0,1)coordinate(B) -- (1,0)coordinate(C) -- (-3,-1)coordinate(D) --cycle;
	\node at (A) {\tiny$\bullet$};
	\node[anchor=south east] at (A) {\footnotesize$(-1,0)$};
\end{tikzpicture}
 \end{tabular}
 	\end{center}
 \caption{The spanning polytope $\Dp$ for the $m$-twisted Hirzebruch surface $\mathcal{F}^{(2)}_m$ with $m=0,1,2,3$.}
  \label{fig:HirzSurfPlots}
 \end{figure}
The first vertex $v_1$, called the VEX point \cite{BH16}, makes $\Dp$ non-convex for $m>3$. 
The VEX point for $\Dp_{\mathcal{F}^{(n+1)}_m}$ is $(-1,-1,\dots,-1,0)$ where there are $n$ nonzero entries. We graphically compute the generalized Duistermaat-Heckman measure here for $m=3$, but this can be done for any $m$. Define the top dimensional cones
\begin{align}
	I_1=\{2,3\} \,\, , \,\, I_2=\{1,3\} \,\, , \,\, I_3=\{1,4\} \,\, , \,\, I_4=\{2,4\}
\end{align}
so that our multi-fan is given by $\Sigma^{(2)}=\{I_j\}_{j=1}^4$. Again, all cones have the same orientation, so we can write down the generalized Duistermaat-Heckman measure
\begin{align}
	\overline{DH}_{\D_{\mathcal{F}^{(2)}_3},\xi}=\mathbbm{1}_{\overline{U}(I_1)^+}-\mathbbm{1}_{\overline{U}(I_2)^+}-\mathbbm{1}_{\overline{U}(I_3)^+}+\mathbbm{1}_{\overline{U}(I_4)^+},
\end{align}
where we chose $\xi=(1,1)$. 
\begin{figure}[h!]
	\begin{center}
		\begin{tikzpicture}
			\draw[->,blue] (-1, -1) coordinate(A)-- (-1, 5)coordinate(B);
			\draw[->,blue] (A) -- (2, -1)coordinate(C);
			\draw[->,red, dashed] (1, -1) coordinate(D) -- (C);
			\draw[->,red, dashed] (D) -- (1,5)coordinate(E);
			\draw[->,red,dashed] (1,-2)coordinate(F) -- (E);
			\draw[->,red,dashed] (F) -- (-1.333, 5)coordinate(J);
			\draw[->,blue] (-1, 4) coordinate(H)-- (-1, 5)coordinate(I);
			\draw[->,blue] (H) -- (J);
			\draw[opacity=0.2,fill=red] (F) -- (J) -- (E) -- cycle;
			\draw[opacity=0.2,fill=blue] (A) -- (B) -- (2,5)coordinate(G) -- (C) --cycle;
			\draw[opacity=0.2,fill=red] (D) -- (E) -- (G) -- ([xshift=1cm]D) --cycle;
			\draw[opacity=0.2,fill=blue] (H) -- (I) -- (J) -- cycle;
			\node[anchor=north east] at (A) {\footnotesize$(-1,-1)$};
			\node[anchor=south west] at ([xshift=0.1cm,yshift=0.1cm]A) {\footnotesize$\overline{U}(I_1)^+$};
			\node at (A) {$\bullet$};
			\node[anchor=north east] at (F) {\footnotesize$(1,-2)$};
			\node at (F) {$\bullet$};
			\node[anchor=north west] at (D) {\footnotesize$(1,-1)$};
			\node[anchor=south west] at ([xshift=-0.1cm,yshift=3cm]D) {\footnotesize$\overline{U}(I_2)^+$};
			\node[anchor=south east] at ([xshift=-0.1cm,yshift=3cm]D) {\footnotesize$\overline{U}(I_3)^+$};
			\node at (D) {$\bullet$};
			\node at (H) {$\bullet$};
			\node[anchor=north east] at (H) {\footnotesize$(-1,4)$};
			\node[anchor=south east] at ([xshift=-0.3cm,yshift=0.4cm]H) {\footnotesize$\overline{U}(I_4)^+$};
		\end{tikzpicture}
	\end{center}
	\caption{The regions $\overline{U}(I_1)^+,\overline{U}(I_2)^+,\overline{U}(I_3)^+,\overline{U}(I_4)^+\subset M_\R\cong\R^2$ for the multi-fan of $\mathcal{F}^{(2)}_3$ with top-dimensional cones $\Sigma^{(2)}=\{I_1,I_2,I_3,I_4\}$. The positive regions in $\overline{DH}_{\D_{\mathcal{F}^{(2)}_3},\xi}$ are plotted in blue and the negative in red so that the purple regions are completely canceled.}
	\label{HirzSurfgDH}
\end{figure}
The regions $\overline{U}(I_j)^+$ are shown in Figure \ref{HirzSurfgDH}. 
We can see part of $\overline{U}(I_3)^+$ which is negative in the generalized Duistermaat-Heckman measure is not canceled out by any of the other positive regions. 
We are left with a non-convex multi-polytope for the support of $\overline{DH}_{\D_{\mathcal{F}^{(2)}_3},\xi}$, which is exactly what would result from the trans-polar construction 
\begin{align}
\text{supp}\overline{DH}_{\D_{\mathcal{F}^{(2)}_3},\xi}=(\Dp_{\mathcal{F}^{(2)}_3})^\nabla=\D_{\mathcal{F}^{(2)}_3}
\end{align}
The negative-area region $\D_\text{ext}=\D\setminus(\Dp)^\circ$ is called the \textit{extension} of the Newton multi-polytope \cite{BH16}.
 \end{example}

 \begin{example}[$\check{\mathcal{F}}^{(2)}_3$ multi-fan]
 	\label{MirrorHizmulti-fan}
	We now turn to the mirror of the previous example. That is, we define the one dimensional cones using the vertices of $\D_{\mathcal{F}^{(2)}_3}$ and show that we recover the non-convex spanning polytope. In other words, the roles of $\D$ and $\Dp$ are reversed. Our vertices are now
	\begin{align}
		v_1=(1,-2) \,\, , \,\, v_2=(-1,4) \,\, , \,\, v_3= (-1,-1) \,\, , \,\, v_4=(1,-1)
	\end{align} 
	 Define the top dimensional cones
	\begin{align}
		I_1=\{1,2\} \,\, , \,\, I_2=\{2,3\} \,\, , \,\, I_3=\{3,4\} \,\, , \,\, I_4=\{4,1\}
	\end{align}
	so that our multi-fan is given by $\Sigma^{(2)}=\{I_j\}_{j=1}^4$. Now, $I_4$ has an orientation that is reversed relative to the other cones due to the fact that the corresponding region in the polytope has negative area.
	 This can be taken into account by setting $w_1=w_2=w_3=1$ while $w_4=-1$. Then, the generalized Duistermaat-Heckman measure becomes
	\begin{align}
		\overline{DH}_{\Dp_{\mathcal{F}^{(2)}_3},\xi}=\mathbbm{1}_{\overline{U}(I_1)^+}+\mathbbm{1}_{\overline{U}(I_2)^+}-\mathbbm{1}_{\overline{U}(I_3)^+}+\mathbbm{1}_{\overline{U}(I_4)^+},
	\end{align}
	where we chose $\xi=(1,0)$. 
	\begin{figure}[h!]
		\begin{center}
			\begin{tikzpicture}[scale=2]
				\draw[->,blue] (-3, -1) coordinate(A)-- (-1, 0)coordinate(B);
				\draw[->,blue] (A) -- (1, 0)coordinate(C);
				\draw[->,red, dashed] (C) -- ([xshift=1.6cm,yshift=0.4cm]C);
				\draw[->,red, dashed] (C) -- ([xshift=1cm,yshift=-1cm]C);
				\draw[->,red,dashed] (0,1)coordinate(D) -- ([xshift=1cm,yshift=1cm]D);
				\draw[->,red,dashed] (D) -- ([xshift=1cm,yshift=-1cm]D);
				\draw[->,blue] (B)-- ([xshift=1cm,yshift=1cm]B);
				\draw[->,blue] (B) -- ([xshift=1.33cm,yshift=0.66cm]B);
				\draw[opacity=0.2,fill=red] (D) -- ([xshift=1cm,yshift=1cm]D) -- ([xshift=1.6cm,yshift=0.4cm]C) -- (C) -- cycle;
				\draw[opacity=0.2,fill=blue] (A) -- ([xshift=2.66cm,yshift=1.33cm]B)-- ([xshift=1.6cm,yshift=0.4cm]C) -- (C) --cycle;
				\draw[opacity=0.2,fill=red] (C) -- ([xshift=1.6cm,yshift=0.4cm]C) -- (3.66,-0.8)coordinate(E)-- ([xshift=1cm,yshift=-1cm]C) --cycle;
				\draw[opacity=0.2,fill=blue] (C) -- ([xshift=1.6cm,yshift=0.4cm]C) -- (E)-- ([xshift=1cm,yshift=-1cm]C) --cycle;
				\draw[opacity=0.2,fill=blue] (B)-- ([xshift=2cm,yshift=2cm]B) -- ([xshift=2.66cm,yshift=1.33cm]B) -- cycle;
				\node[anchor=north east] at (A) {\footnotesize$(-3,-1)$};
				\node[anchor=south west,rotate=20] at ([xshift=1.33cm,yshift=0.25cm]A) {\footnotesize$\overline{U}(I_1)^+$};
				\node at (A) {$\bullet$};
				\node[anchor=south east] at (B) {\footnotesize$(-1,0)$};
				\node[rotate=30] at ([xshift=0.9cm,yshift=0.66cm]B) {\footnotesize$\overline{U}(I_4)^+$};
				\node at (B) {$\bullet$};
				\node[anchor=north,xshift=-0.1cm,yshift=-0.2cm] at (C) {\footnotesize$(1,0)$};
				\node[rotate=-40,anchor=north west] at ([xshift=0.8cm,yshift=0.1cm]C) {\footnotesize$\overline{U}(I_2)^+$};
				\node at (C) {$\bullet$};
				\node[anchor=south east] at (D) {\footnotesize$(0,1)$};
				\node[rotate=-40] at ([xshift=2cm,yshift=0.66cm]B) {\footnotesize$\overline{U}(I_3)^+$};
				\node at (D) {$\bullet$};
			\end{tikzpicture}
		\end{center}
		\caption{The regions $\overline{U}(I_1)^+,\overline{U}(I_2)^+,\overline{U}(I_3)^+,\overline{U}(I_4)^+\subset M_\R\cong\R^2$ for the multi-fan of $\check{\mathcal{F}}^{(2)}_3$ with top-dimensional cones $\Sigma^{(2)}=\{I_1,I_2,I_3,I_4\}$. The positive regions in $\overline{DH}_{\Dp_{\mathcal{F}^{(2)}_3},\xi}$ are plotted in blue and the negative in red so that the purple regions are completely canceled.}
		\label{MirrorHirzSurfgDH}
	\end{figure}
The regions $\overline{U}(I_j)^+$ are shown in Figure \ref{MirrorHirzSurfgDH}. Again, we see that mirror duality of these non-convex polytopes is restored by the trans-polar operation.
\begin{align}
	\text{supp}\overline{DH}_{\Dp_{\mathcal{F}^{(2)}_3},\xi}=(\D_{\mathcal{F}^{(2)}_3})^\nabla=\Dp_{\mathcal{F}^{(2)}_3}
\end{align}
It is interesting to note that the VEX point $u_{I_4}=(-1,0)$ corresponds to the top-dimensional cone $I_4$ with negative weight.
\end{example}

In these two dimensional examples, we are able to compute the support of the generalized Duistermaat-Heckman measure by finding the intersection of the dual cones $\overline{U}(I)^+$.
In higher dimensions, or with more vertices, it would be inefficient to do this computation without more tools.
To this end, we introduce an algebraic structure in the next section and we present a method to compute the support of $\overline{DH}_{\D,\xi}$ using the tropical hyperplanes $\beta_{v_\rho}$ associated to the monomials $Y^{v_\rho}$. 
This method is shown to produce $\text{supp}\,\overline{DH}_{\D,\xi} = \D$ for the non-Fano cases in question.
With the graded ring $H^{\text{trop}}_*(\Sigma)$, we will be able to compute the support of the generalized Duistermaat-Heckman measure for multi-polytopes of arbitrary dimensions.
Later, the generalized Duistermaat-Heckman measure will be used to equate the euclidean volume of a multi-polytope to the symplectic volume of the associated non-Fano toric variety.
\subsection{Graded ring generated by tropical hyperplanes}
\label{sec:gDH2}
Tropical geometry is piece-wise linear algebraic geometry where we replace the base commutative ring with the tropical semiring $(\R^{\text{trop}},\oplus_0,\otimes_0)$. Maslov dequantization realizes the \textit{tropical limit} $z\rightarrow 0$ using logarithms \cite{Mik04}. Namely, it takes the operations 
\begin{samepage}
	\begin{align}
		x\oplus_z y &= -\log_z\left(z^{-x}+z^{-y}\right)\nopagebreak\\
		x\otimes_z y &= -\log_z\left(z^{-x}z^{-y}\right)
	\end{align}
\end{samepage}
to their tropical counterparts
\begin{align}
	x\oplus_0 y &= \min\{x,y\} \\
	x\otimes_0 y &= x+y.
\end{align}
There are no inverses for $\oplus_0$, and we include the additive identity $\infty$, which makes $(\R^{\text{trop}},\oplus_0,\otimes_0)$ a semiring. 
Further, the map $x\mapsto -x$ is a semiring isomorphism, so we could also define $x\oplus_0 y = \max \{x,y\}$, but we  use $\min$ in what follows. Algebraic geometry over $\R^\text{trop}$ consists of studying varieties of tropical polynomials, i.e. polynomials where $\oplus_0$ and $\otimes_0$ are the operations. If $f\in\R^\text{trop}[x_1,\dots,x_n]$ is a tropical polynomial written as a sum of monomials $\beta=\{\beta_1,\dots,\beta_p\}$, then  the variety of $f$ is
\begin{align}
	V(f)=\text{Sing}(\min\{0,\beta\}).
\end{align}
The singular locus of the minimum of the monomials $\text{Sing}(\min\{0,\beta\})$ is all the points in $\R^n$ where two or more $\beta_i$ are tied for the minimum in $\beta$, or where $0=\beta_i$ is the minimum of $\beta$. In the tropical limit, all the $\beta_i$ are linear, so $V(f)$ is a weighted polyhedral complex. The regions where $\min\{0,\beta\}=\beta_i=0$ are faces of the polyhedral complex $V(f)$ and the regions where $\min\{0,\beta\}=\beta_i=\beta_j$ are asymptotic directions. This is illustrated in Figure \ref{CP2TropV}.
\begin{figure}[h!]
	\begin{center}
			\begin{tikzpicture}[scale=1.2]
			\draw (-1,-1)coordinate(A) -- (-1,2)coordinate(B);
			\draw (A) -- (2,-1)coordinate(C);
			\draw (B) -- (C);
			\draw (-2,-2)coordinate(D) -- (A);
			\draw (-2,4)coordinate(E) -- (B);
			\draw (C) -- (3,-1.5)coordinate(F);
			\node at (A) {$\bullet$};
			\node[anchor=south east] at (A) {$(-1,-1)$};
			\node at (B) {$\bullet$};
			\node[anchor=east] at (B) {$(-1,2)$};
			\node at (C) {$\bullet$};
			\node[anchor=south west] at (C) {$(2,-1)$};
			\node[anchor=east] at ([yshift=1.5cm]A) {$\beta_1=0$};
			\node[anchor=north] at ([xshift=1.5cm]A) {$\beta_2=0$};
			\node[anchor=west] at ([xshift=1.5cm,yshift=1.6cm]A) {$\beta_3=0$};
		\node[anchor=east] at (D) {$\beta_1=\beta_2$};
		\node[anchor=east] at (E) {$\beta_1=\beta_3$};
		\node[anchor=north] at (F) {$\beta_2=\beta_3$};
		\end{tikzpicture}
	\end{center}
\caption{The tropical variety $V(f)$ of the tropical polynomial 
		$f(y_1,y_2)=0\oplus_0\beta_1 \oplus_0 \beta_2\oplus_0 \beta_3$
	Notice the faces of the polyhedron are given by $\beta_i=0$ and the asymptotic directions are given by $\beta_i=\beta_j$ where $\beta_1=1+y_1, \beta_2=1+y_2,$ and $\beta_3=1-y_1-y_2$.}
\label{CP2TropV}
\end{figure}
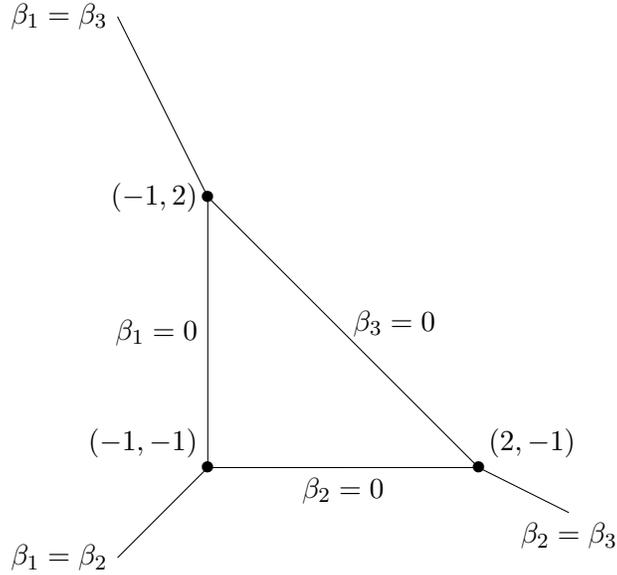
\par The utility of tropical geometry in string theory became apparent through the Gross-Siebert program of mirror symmetry \cite{GS}. Taking an SYZ mirror symmetry perspective \cite{SYZ} and using a Gross-Siebert tropical degeneration, Abouzaid-Ganatra-Iritani-Sheridan \cite{Sheridan_etal} decompose the real Lagrangian cycle $\check{C}^+_z=\check{X}_z\cap\R_+^n$ and compute its period $\pi(\check{C}^+_z)$. The $\widehat{\Gamma}$-conjecture for Calabi-Yau hypersurfaces in toric varieties, given by Equation \ref{GammaEqn}, is proven in the large radius/large complex structure limit by using this decomposition and integrating the Duistermaat-Heckman measure over the \textit{tropical amoeba} 
\begin{align}
	\mathcal{A}^\text{trop}(\check{C}^+_z)=\lim_{z\rightarrow 0} \Log_z (\check{C}^+_z)
\end{align}
of the real Lagrangian cycle, which is a limit of the amoebas $\mathcal{A}^z(\check{C}^+_z)$ defined similarly but without the limit.
Here,
\begin{samepage}
	\begin{align}
		\Log_z:(\C^*)^{n+1}&\longrightarrow\R^{n+1}\\
		(Y_1,\dots,Y_{n+1})&\longmapsto(y_1,\dots,y_{n+1})=(\log_z |Y_1|,\dots,\log_z |Y_{n+1}|) \nn
	\end{align}
\end{samepage}
 is a moment map on $(\C^*)^{n+1}$. It also is shown that the unique compact connected component of the amoeba converges to boundary of the Newton polytope of the toric ambient space $\mathcal{Y}_\D$
 \begin{align}
 	\mathcal{A}^\text{trop}(\check{C}^+_z) \sim \partial\D,
 \end{align}
which becomes the base of an SYZ fibration\footnote{The space $\mathcal{Y}_{\Dp}$ is not a toric variety since it was constructed from a non-trivial multi-fan structure. Rather, it is a torus manifold \cite{HattoriMasuda}.}
\begin{equation}
	\label{SYZdigaram}
	\begin{tikzcd}
		& \mathcal{Y}_\Delta & &\mathcal{Y}_{\Dp} & \\
		\mathscr{O}_X \arrow[r, two heads] & X \arrow[u,hook] \arrow[dr, swap, "p"] & & \check{X}_z \arrow[dl, "\check{p}"] \arrow[u,hook] & \check{C}^+_z \arrow[l, hook'] \\
		& & B\cong\partial\Delta & & 
	\end{tikzcd}
\end{equation}

\begin{figure}[h!]
	\begin{center}
		\includegraphics[scale=0.4]{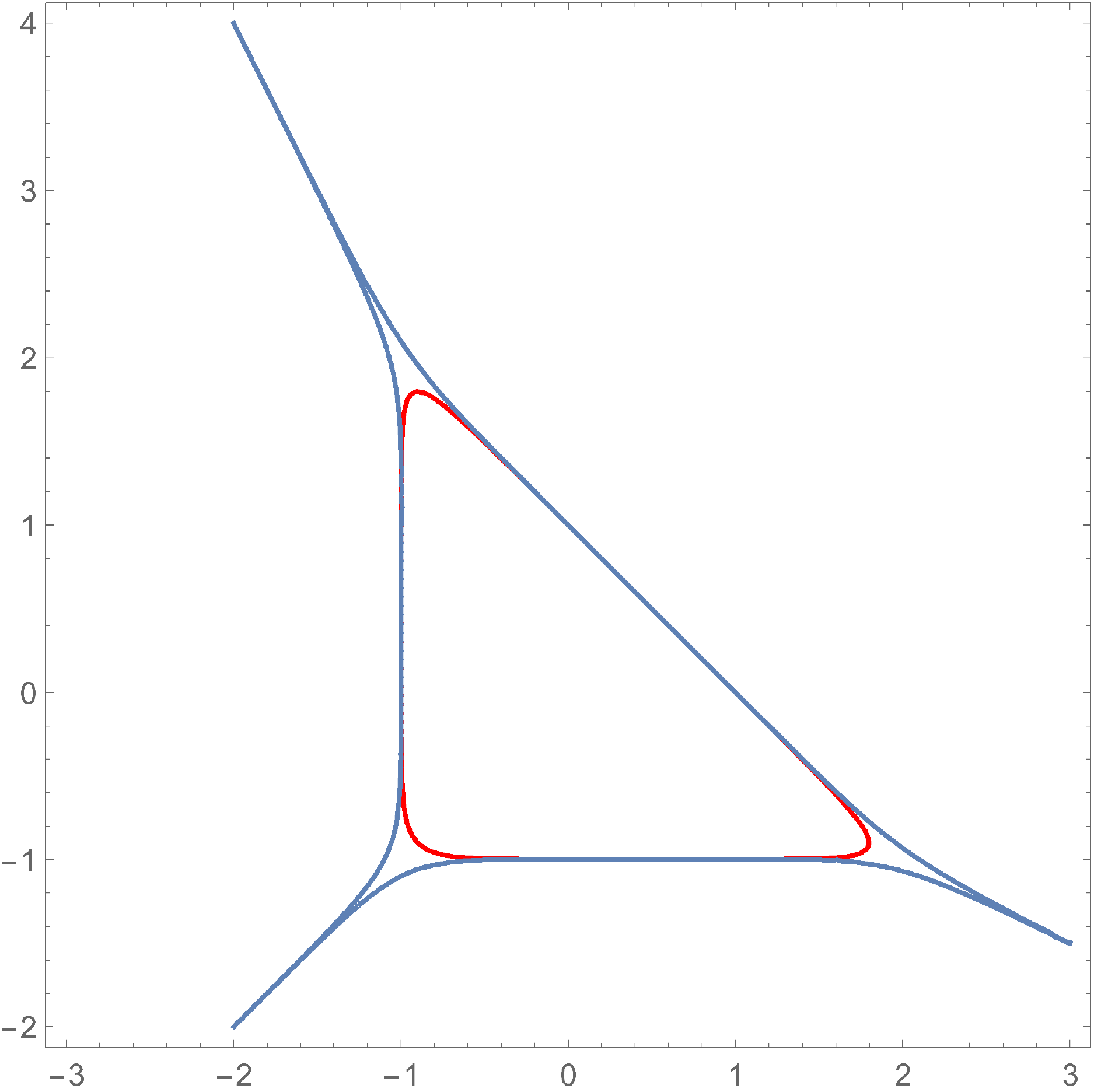}
	\end{center}
\caption{The amoeba of the real Lagrangian cycle $\mathcal{A}^z(\check{C}^+_z)$ for $\CP^2$ with $W_z=1-z(Y_1+Y_2+(Y_1Y_2)^{-1})$ is plotted in red for $z=0.001$. In blue we have the (boundary of the) amoeba of the entire mirror manifold $\mathcal{A}^z(\check{X})$.}
\label{CP2amoeba}
\end{figure}
We extend these results to non-Fano toric ambient spaces defined by multi-polytopes, such as $\mathcal{Y}_\D=\mathcal{F}^{(3)}_3$, using the generalized Duistermaat-Heckman measure discussed in the previous section. To this end, we tropicalize the superpotential and describe how it can be used to construct the generalized Duistermaat-Heckman measure using tropical hyperplanes.
\par We can write the tropicalized superpotential\footnote{If there are multiple complex structure moduli $(z_1,\dots,z_s)$, we rewrite $W_z(Y)$ in a common base $z_i=z$. Then $y_i=\log_{z}|Y_i|$ and $z_j=z^{\log z_j}$.}
\begin{align}
	W^\text{trop}_z(y)=\sum_{\rho\in\Sigma^{(1)}}z^{\beta_\rho(y)}
\end{align}
where $y=(y_1,\dots,y_{n+1})\in\R^{n+1}$ and
\begin{align}
	\beta_\rho(y)=\lambda_\rho+v_\rho\cdot y
\end{align}
is the tropical monomial corresponding to $v_\rho$. It deserves such a name since it is a linearization of the monomial $z^{\lambda_\rho}Y^{v_\rho}$ and 
\begin{align}
	\log_z\big|z^{\lambda_\rho}Y^{v_\rho} \big|=\beta_\rho(\Log_z(Y)).
\end{align}
The coefficients $\lambda_\rho$ are determined by the K\"{a}hler form
\begin{align}
	\omega=\sum_{\rho\in\Sigma^{(1)}}\lambda_\rho D_\rho,
\end{align}
and since we are looking at the anticanonical divisor $K^*$, we set every $\lambda_\rho=1$. We drop the coordinate $y$ to denote the region ``above" the corresponding hyperplane
\begin{align}
	\beta_\rho=\{\lambda_\rho+v_\rho\cdot x\ge0\}\subset\R^{n+1}.
\end{align}
Multiple constraints are denoted by multiple indices.
\begin{align}
	\beta_{\rho_1\rho_2}=\{\lambda_{\rho_1}+v_{\rho_1}\cdot x\ge0 \,\, , \,\, \lambda_{\rho_2}+v_{\rho_2}\cdot x\ge0\}=\beta_{\rho_1}\cap\beta_{\rho_2}\subset\R^{n+1}.
\end{align}
An index with a tilde corresponds to the region ``below" the corresponding hyperplane
\begin{align}
	\beta_{\tilde{\rho}}=\{\lambda_\rho+v_\rho\cdot x\le0\}=\mathbbm{1}-\beta_\rho\subset\R^{n+1}.
\end{align}
Here, $\mathbbm{1}\cong \R^{n+1}$ is the identity with respect to intersection, since 
\begin{align}
	\beta_\rho\cap\mathbbm{1}=\beta_{\rho}.
\end{align}
We call $\beta_{i_1\dots i_k}$ a \textit{tropical polytope} with $k$ constraints. 
When $k=1$, $\beta_{i_1}$ is called a \textit{tropical hyperplane}.
When $k$ is the dimension of the toric variety, we call $\beta_{i_1\dots i_{n+1}}$ a \textit{tropical cone}.
The set of tropical polytopes with $k$ constraints  be denoted by $H^\text{trop}_k(\Sigma)$ to emphasize the fact that their intersection data is encoded in the multi-fan $\Sigma$. 
We  define a graded ring structure on $H^\text{trop}_*(\Sigma) = \oplus_k H^\text{trop}_k(\Sigma)$ which  allow us to compute the support of $\overline{DH}_{\D,\xi}$. This is because the measure is written as a weighted sum of the characteristic functions $\mathbbm{1}_{\overline{U}(I)^+}$ and
\begin{align*}
	\text{supp}\,\mathbbm{1}_{\overline{U}(I)^+}=\beta_{i_1i_2\dots i_{n+1}}
\end{align*}
where $i_j\in I$ and the $i_j$  get a tilde if $i_j\notin E_I$. 
We first consider the simple multi-polytope (i.e. polytope) case before moving to the multi-polytopes which define non-Fano toric varieties.
\begin{proposition}
	If $\Delta$ is reflexive and $\Delta^*=\emph{Conv}\{v_1,\dots,v_k\}$, then
	\begin{align}
		\Delta=\beta_{12\dots k}
	\end{align}
\end{proposition}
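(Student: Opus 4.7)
The plan is to unfold the two sides using the defining formulas and recognize that the statement reduces to the fact that a linear inequality $\nu\cdot x\ge -1$ holds for every $\nu$ in a polytope if and only if it holds at the vertices. Once we set $\lambda_\rho=1$ (the anticanonical case used throughout the section), the tropical hyperplanes become $\beta_i=\{x\in\R^{n+1}\,:\,1+v_i\cdot x\ge 0\}$, and so
\begin{equation*}
\beta_{12\dots k}=\bigcap_{i=1}^k\beta_i=\{x\in M_\R\,:\,v_i\cdot x\ge -1\text{ for every }i=1,\dots,k\}.
\end{equation*}
On the other side, since $\D$ is reflexive, $\D=(\Dp)^\circ$, so by Equation \eref{polar} we have
\begin{equation*}
\D=\{\mu\in M_\R\,:\,\mu\cdot\nu\ge -1\text{ for every }\nu\in\Dp\}.
\end{equation*}

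First I would establish the inclusion $\D\subseteq\beta_{12\dots k}$: since $v_i\in\Dp$ for each vertex, any $\mu\in\D$ automatically satisfies $v_i\cdot\mu\ge -1$, which places $\mu$ inside every $\beta_i$ and hence in the intersection. The reverse inclusion $\beta_{12\dots k}\subseteq\D$ is where the convexity of $\Dp$ is used. Given $\mu\in\beta_{12\dots k}$ and any $\nu\in\Dp=\mathrm{Conv}\{v_1,\dots,v_k\}$, write $\nu=\sum_i t_i v_i$ with $t_i\ge 0$ and $\sum_i t_i=1$; then by linearity of the inner product,
\begin{equation*}
\nu\cdot\mu=\sum_i t_i\,(v_i\cdot\mu)\ge \sum_i t_i\cdot(-1)=-1,
\end{equation*}
so $\mu$ satisfies the polar condition against every $\nu\in\Dp$ and therefore lies in $\D$.

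There is no real obstacle here beyond recording the conventions: the only thing to double-check is that the conventions in the tropical monomial $\beta_\rho(y)=\lambda_\rho+v_\rho\cdot y$ match the convention $\mu\cdot\nu\ge -1$ used to define the polar in Equation \eref{polar}, which is exactly the case when $\lambda_\rho=1$. Note that reflexivity is actually only used through the identification $\D=(\Dp)^\circ$; the convex-hull step works whenever $\Dp$ is convex with vertices $v_1,\dots,v_k$. The reflexivity hypothesis is what guarantees both this convexity and that the $\lambda_\rho=1$ choice for the anticanonical divisor yields the correct Newton polytope, so this proposition serves as the base case before the non-Fano generalization treated in the next subsection.
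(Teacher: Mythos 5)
Your proof is correct and reaches the same conclusion as the paper's, but by a slightly more elementary route: you work directly from the polar definition in Equation~\eref{polar} and verify the set equality by double inclusion, with the convex-combination argument doing the real work in the direction $\beta_{12\dots k}\subseteq\D$. The paper's proof instead invokes the geometry of the normal fan — that each $v_i$ is the inward normal to a facet $F_i$ of $\D$, so $\D$ is the intersection of the corresponding half-spaces — and defers to Nishimura (Proposition 4.1) for rigor. The two arguments are really the same fact seen from different sides: the paper's ``facet normals cut out the polytope'' statement is exactly your observation that a linear inequality holds on $\mathrm{Conv}\{v_1,\dots,v_k\}$ iff it holds at each $v_i$. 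Your version is more self-contained and explicit about where convexity enters; your closing remark that reflexivity is only needed to identify $\D=(\Dp)^\circ$ while the convex-hull step is more general is a valid and useful observation, and it is consistent with the paper treating this proposition as the base case before relaxing convexity.
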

\begin{proof}
	This is essentially proven in \cite[Proposition 4.1]{Nis06}, but here we are using new notation. We take $\Sigma$ to be the normal fan so that $\Sigma^{(1)}=\{v_1,\dots,v_k\}$. 
	By construction, $v_i$ is normal to the $i$th face of $\Delta$, and so the $i$th face is given by
	\begin{align}
		F_i=\{1+v_i\cdot x = 0\}
	\end{align}
	To fill out the interior of $\Delta$, we take the common positive span of the normals to these hyperplanes:
	\begin{align}
		\Delta
		&=\{1+v_1\cdot x \ge 0,\dots, 1+v_k\cdot x\ge0 \}=\beta_{12\dots k} \nonumber
	\end{align}
\end{proof}
To compute sums, differences and intersections of the $\beta_\rho$'s we need to derive the algebraic rules that they satisfy. 
To this end, the following propositions are proven. 
A region with a minus sign is interpreted as having reversed orientation, or negative volume. The propositions below generalize to multi-indices, but we  only need to make use of what was proved.
\begin{proposition}
	The difference between two regions which share $k$ constraints is
	\begin{align}
		\beta_{i_1\dots i_k}-\beta_{i_1\dots i_k i_{k+1}}=\beta_{i_1\dots i_k \tilde{i}_{k+1}}
	\end{align}
\end{proposition}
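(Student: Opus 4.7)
The plan is simply to decompose the $k$-constrained region $\beta_{i_1\dots i_k}$ with respect to the tropical hyperplane $H_{i_{k+1}}=\{\lambda_{i_{k+1}}+v_{i_{k+1}}\cdot x=0\}$ and read off the identity. By definition, $\beta_{i_1\dots i_k}\subset\R^{n+1}$ is the polyhedral region cut out by the $k$ half-space inequalities $\lambda_{i_j}+v_{i_j}\cdot x\ge 0$ for $j=1,\dots,k$. The hyperplane $H_{i_{k+1}}$ partitions $\R^{n+1}$ into the upper half-space $\beta_{i_{k+1}}$ and the lower half-space $\beta_{\tilde{i}_{k+1}}=\mathbbm{1}-\beta_{i_{k+1}}$, whose intersection is the measure-zero slice $H_{i_{k+1}}$ itself.

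Intersecting both half-spaces with $\beta_{i_1\dots i_k}$ gives the decomposition
\[
\beta_{i_1\dots i_k} \;=\; \bigl(\beta_{i_1\dots i_k}\cap\beta_{i_{k+1}}\bigr)\cup\bigl(\beta_{i_1\dots i_k}\cap\beta_{\tilde{i}_{k+1}}\bigr) \;=\; \beta_{i_1\dots i_k i_{k+1}}\cup\beta_{i_1\dots i_k\tilde{i}_{k+1}},
\]
the two pieces overlapping only in $\beta_{i_1\dots i_k}\cap H_{i_{k+1}}$, which has measure zero. At the level of characteristic functions this set-theoretic decomposition becomes the a.e.\ equality $\mathbbm{1}_{\beta_{i_1\dots i_k}}=\mathbbm{1}_{\beta_{i_1\dots i_k i_{k+1}}}+\mathbbm{1}_{\beta_{i_1\dots i_k\tilde{i}_{k+1}}}$, and rearranging yields the proposition.

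The only subtle point — and not really an obstacle so much as a matter of bookkeeping — is the interpretation of subtraction inside the graded ring $H^{\text{trop}}_*(\Sigma)$. One must view negatively-weighted regions as oriented domains, so that the additive structure is compatible with the eventual integration of characteristic functions against a volume form to produce $\overline{DH}_{\D,\xi}$. This convention has already been adopted by the paper through the relation $\beta_{\tilde{\rho}}=\mathbbm{1}-\beta_\rho$, which is itself the $k=0$ case of the proposition (take $\beta_{i_1\dots i_k}=\mathbbm{1}$), providing a consistency check. With this convention in place, the half-space decomposition above immediately gives the claim.
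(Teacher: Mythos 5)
Your proof is correct and rests on the same geometric fact as the paper's: since the first $k$ constraints are shared, $\beta_{i_1\dots i_k i_{k+1}}\subset\beta_{i_1\dots i_k}$, and subtracting removes exactly the half of $\beta_{i_1\dots i_k}$ lying above the $(k+1)$-th hyperplane, leaving $\beta_{i_1\dots i_k\tilde{i}_{k+1}}$. The paper phrases this as ``set difference equals intersection with the complement'' and then discards the flipped-constraint pieces that die under intersection, whereas you phrase it as a half-space partition of the parent region at the level of characteristic functions; the two are dual presentations of the same one-step argument, with yours being marginally tighter since it avoids the paper's slightly loose identification of the complement with a union of single-flip regions.
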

\begin{proof}
	Since these regions share the hyperplane constraints defined by normal vectors $v_{i_1},\dots,v_{i_k}$, 
	we have that $\beta_{i_1\dots i_k i_{k+1}}\subset\beta_{i_1\dots i_k}$.
	The region difference can then be interpreted as the set difference, 
	which can be computed in terms of the intersection of the compliment. 
	We write the compliment as a union of regions where one constraint is flipped, 
	and then the only region that survives the intersection is $\beta_{i_1\dots i_k \tilde{i}_{k+1}}$.
	\begin{align*}
		\beta_{i_1\dots i_k}-\beta_{i_1\dots i_k i_{k+1}}&=	\beta_{i_1\dots i_k}\cap\left(\beta_{i_1\dots i_k i_{k+1}}\right)^c\\
		&=\beta_{i_1\dots i_k}\cap\left(\beta_{\,\tilde{i}_1\dots i_k i_{k+1}}\cup\dots\cup\beta_{i_1\dots \tilde{i}_k i_{k+1}}\cup\beta_{i_1\dots i_k \tilde{i}_{k+1}}\right)\\
		&=\beta_{i_1\dots i_k \tilde{i}_{k+1}}
	\end{align*}
\end{proof}

\begin{proposition}
	The difference between two regions which share one constraint is
	\begin{align}
		\beta_{ij}-\beta_{ik}=\beta_{ij\,\tilde{k}}-\beta_{i\,\tilde{j}\,k}
	\end{align}
\end{proposition}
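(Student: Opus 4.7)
The plan is to decompose each of $\beta_{ij}$ and $\beta_{ik}$ using the previous proposition, so that their common piece cancels and the remainders give the right-hand side.

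First I would rewrite the preceding proposition in additive form: for any region $\beta_{i_1\dots i_k}$ and any further index $i_{k+1}$, we have
\begin{align}
\beta_{i_1\dots i_k}=\beta_{i_1\dots i_k i_{k+1}}+\beta_{i_1\dots i_k \tilde{i}_{k+1}},
\end{align}
which just says that a polyhedral region splits into the parts lying above and below an additional tropical hyperplane. This is the only fact I will need.

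Next I would apply this splitting in two different ways. Applied to $\beta_{ij}$ with the additional constraint coming from $v_k$, it gives $\beta_{ij}=\beta_{ijk}+\beta_{ij\tilde{k}}$. Applied to $\beta_{ik}$ with the additional constraint coming from $v_j$, and using that the order of indices is immaterial since each $\beta_{i_1\dots i_m}$ is an intersection of half-spaces, it gives $\beta_{ik}=\beta_{ikj}+\beta_{i\tilde{j}k}=\beta_{ijk}+\beta_{i\tilde{j}k}$.

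Subtracting the two identities then cancels the common piece $\beta_{ijk}$ and leaves the desired equality
\begin{align}
\beta_{ij}-\beta_{ik}=\beta_{ij\tilde{k}}-\beta_{i\tilde{j}k}.
\end{align}
There is no real obstacle here: the only subtle point worth flagging is that the notation $\beta_{i_1\dots i_m}$ is symmetric in its indices (since it is defined as an intersection of half-spaces), so the reshuffling $\beta_{ikj}=\beta_{ijk}$ used above is legitimate, and the minus signs are to be read in the oriented-region sense already introduced before the proposition.
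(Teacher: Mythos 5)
Your proof is correct, and it takes a genuinely different route from the paper's. The paper does not invoke the preceding proposition at all; instead it recomputes from scratch, writing $\beta_{ij}-\beta_{ik}$ as a signed symmetric difference $\beta_{ij}\cap(\beta_{ik})^c-(\beta_{ij})^c\cap\beta_{ik}$, expanding the complements into unions of half-space intersections, and observing which pieces survive. You instead rearrange the preceding proposition into the splitting identity $\beta_{i_1\dots i_k}=\beta_{i_1\dots i_k i_{k+1}}+\beta_{i_1\dots i_k \tilde{i}_{k+1}}$, apply it once to each of $\beta_{ij}$ and $\beta_{ik}$, and cancel the common piece $\beta_{ijk}$. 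Your derivation is shorter, avoids the slightly delicate step of expanding complements as unions (where the paper implicitly drops a $\beta_{\tilde i\tilde k}$ term that dies upon intersection with $\beta_{ij}$), and makes the logical dependence on the preceding proposition explicit rather than re-proving its content inline. The one thing you correctly flag — that the multi-index notation is symmetric so $\beta_{ikj}=\beta_{ijk}$ and $\beta_{ik\tilde j}=\beta_{i\tilde j k}$ — is the only bookkeeping point needed to make the cancellation clean.
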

\begin{proof}
	The proof is very similar, but now $\beta_{ik}\not\subset\beta_{ij}$. 
	Instead of a set difference, we now have a signed symmetric difference
    \begin{samepage}
	\begin{align*}
			\beta_{ij}-\beta_{ik}&=	\beta_{ij}\cap\left(\beta_{ik}\right)^c-\left(\beta_{ij}\right)^c\cap\beta_{ik}\\
		&=\beta_{ij}\cap\left(\beta_{\,\tilde{i}\,k}\cup\beta_{i\,\tilde{k}} \right)-\left(\beta_{\,\tilde{i}\,j}\cup\beta_{i\,\tilde{j}} \right)\cap\beta_{ik}\\
		&=\beta_{ij\,\tilde{k}}-\beta_{i\,\tilde{j}\,k}
	\end{align*}
    \end{samepage}
\end{proof}
\noindent We now present 5 calculations which utilize the algebraic relations proved above.
\begin{example}[gDH measure for $\CP^2$ using $H^{\text{trop}}_*(\Sigma)$]
	The calculation of $\text{supp}\overline{DH}_{\D_{\CP^2},\xi}$  as in Example \ref{P2miltifan} can now be done completely algebraically. Using the data from that example, we have
	\begin{align*}
		\text{supp}\overline{DH}_{\D_{\CP^2},\xi}&=\sum_{k=1}^3(-1)^{|E_{I_k}|}w_{I_k}\,\text{supp}\,\mathbbm{1}_{\overline{U}(I_k)^+}\\
		&= \beta_{12}+\beta_{\tilde{2}\,\tilde{3}}-\beta_{1\,\tilde{3}}\\
		&= \beta_{12}-\beta_{12\,\tilde{3}}\\
		&=\beta_{123}
	\end{align*}
The term $\beta_{\,\tilde{1}\,\tilde{2}\,\tilde{3}}=\emptyset$, so it was not included in the third line. We have now shown algebraically that $\text{supp}\overline{DH}_{\D_{\CP^2},\xi}=\D_{\CP^2}$.
\end{example}
\begin{example}[gDH measure for $\mathcal{F}^{(2)}_3$ using $H^{\text{trop}}_*(\Sigma)$]
	\label{HirzSurfBetas}
	Consider the non-Fano Hirzebruch surface $\mathcal{F}^{(2)}_3$, where
	\begin{align*}
		\Sigma^{(1)}=\{(-1,0),(1,0),(0,1),(-3,-1)\}=\{v_i\}_{i=1}^4.
	\end{align*}
	We can do a very similar calculation. Now our top-dimensional cones are 
	\begin{align}
		I_1&=\{1,3\}  & I_2&=\{1,4\}\\
		I_3&=\{2,3\}  & I_4&= \{2,4 \}\nonumber 
	\end{align}
	and our tropical cones for $\xi=(1,0)$ are
	\begin{align}
		\text{supp}\,\mathbbm{1}_{\overline{U}(I_1)^+}&=\beta_{\,\tilde{1}3}  & \text{supp}\,\mathbbm{1}_{\overline{U}(I_2)^+}&=\beta_{1\,\tilde{4}}\\
		\text{supp}\,\mathbbm{1}_{\overline{U}(I_3)^+}&=\beta_{23}  & \text{supp}\,\mathbbm{1}_{\overline{U}(I_4)^+}&=\beta_{\,\tilde{2}\,\tilde{4}}\,. \nonumber
	\end{align}
	These regions can be summed with appropriate signs to compute the support of the generalized Duistermaat-Heckman measure.
	\begin{align*}
		\text{supp}\,\overline{DH}_{\D_{\mathcal{F}^{(2)}_3},\xi}&=\sum_{k=1}^4(-1)^{|E_{I_k}|}w_{I_k}\,\text{supp}\,\mathbbm{1}_{\overline{U}(I_k)^+}\\
		&=-\beta_{\,\tilde{1}3}-\beta_{1\,\tilde{4}}+\beta_{23}+\beta_{\,\tilde{2}\,\tilde{4}}\\
		&=(\beta_{23}-\beta_{\,\tilde{1}3})-(\beta_{1\,\tilde{4}}-\beta_{\,\tilde{2}\,\tilde{4}})\\
		&=\beta_{123}-\beta_{12\,\tilde{4}}\\
		&=\beta_{1234}-\beta_{12\,\tilde{3}\,\tilde{4}}
	\end{align*}
	We can see that there is a positive part from $(\Dp)^\circ$ and a negative part from $\D_\text{ext}$. 
\end{example}

\begin{example}[gDH measure for $\check{\mathcal{F}}^{(2)}_3$ using $H^{\text{trop}}_*(\Sigma)$]
	\label{MirrorHirzBeta}
	The mirror of the previous example can be handled in the same manner, but now $w_4=-1$ from the reversed orientation cone shown in Figure \ref{fig:F3Multifan}. From Example \ref{MirrorHizmulti-fan} we can show
\begin{align*}
	\text{supp}\,\overline{DH}_{\Dp_{\mathcal{F}^{(2)}_3},\xi}=\beta_{234}+\beta_{1\,\tilde{2}\,3}
\end{align*}
It is necessary to express our result as a sum of two regions due to the non-convexity of $\Dp$. Here, both regions are positive unlike what we saw in the previous example for $\D$. Now that we have constructed $\D$ and $\Dp$ we can compute the volume of each. We find $\text{Vol}(\D)+\text{Vol}(\Dp)=12$ in unit triangles due to the famous 12 theorem. The 12 theorem is also equivalent to the statement that the second Todd class integrates to 1:
\begin{align}
	\int_{\mathcal{Y}_\D}\text{Td}_2(\mathcal{Y}_\D)=\frac{1}{12}\int_{\mathcal{Y}_\D} c_2+c_1^2=1
\end{align}
More precisely, the multi-polytope $\D$ for $\mathcal{F}^{(2)}_3$ from the previous example has $\text{Vol}(\D)=9-1=8$, with the negative area unit triangle coming from $\D_\text{ext}$. The spanning polytope has $\text{Vol}(\Dp)=4$, so the 12 theorem holds.
\end{example}

\begin{example}[gDH measure for $\mathcal{F}^{(3)}_3$ using $H^{\text{trop}}_*(\Sigma)$]
	\label{3dF3gDH}
	Moving on to three dimensions, we can compute the support of the generalized Duistermaat-Heckman measure for $\mathcal{F}^{(3)}_3$. 
	Using the data in Example \ref{ex:Hirzebruch3d}, we can construct the top dimensional cones
	\begin{align}
		I_1&=\{1,2,4\}  & I_2&=\{1,2,5\}\nonumber\\
		I_3&=\{1,3,4\}  & I_4&= \{1,3,5 \} \\
		 I_5&= \{2,3,4\} &  I_6&= \{2,3,5\}\nonumber
	\end{align}
These 6 cones were chosen from the $5 \choose {3}$ $ =10$ possible combinations so that $\Sigma^{(3)}=\{I_k\}_{k=1}^6$ defines a star triangulation of $\Delta^*$. 
Namely, the origin $(0,0,0)\in\R^3$ is the common point (star origin) of all the cones in $\Sigma^{(3)}$. 
Further, the generators of each cone are linearly independent and the vertices $u_I=\cap_{i\in I} F_i$ of the dual polytope $\D$ are all integral. 
There are two cones $\{2,4,5\}$ and $\{3,4,5\}$ where the corresponding intersection point $u_I$ would be $(\frac{5}{3},-1,-1)$ or $(-1,\frac{5}{3},-1)$, respectively, so we do not include these cones.
By computing the $E_I$ for $\xi=(1,2,3)$, we get our tropical cones
\begin{align}
	\text{supp}\,\mathbbm{1}_{\overline{U}(I_1)^+}&=\beta_{12\tilde{4}}  & \text{supp}\,\mathbbm{1}_{\overline{U}(I_2)^+}&=\beta_{\,\tilde{1}25}\nonumber\\
	\text{supp}\,\mathbbm{1}_{\overline{U}(I_3)^+}&=\beta_{1\,\tilde{3}\,\tilde{4}}  & \text{supp}\,\mathbbm{1}_{\overline{U}(I_4)^+}&=\beta_{\,\tilde{1}\,\tilde{3}\,5} \\
	\text{supp}\,\mathbbm{1}_{\overline{U}(I_5)^+}&=\beta_{\,\tilde{2}\,\tilde{3}\,\tilde{4}} &  \text{supp}\,\mathbbm{1}_{\overline{U}(I_6)^+}&=\beta_{235}\nonumber
\end{align}
We can easily compute the support of  $\overline{DH}_{\D,\xi}$ now that we have this set up. 
Using the definition of the generalized Duistermaat-Heckman measure \ref{gDH}, we have
\begin{samepage}
\begin{align*}
	\text{supp}\,\overline{DH}_{\D_{\mathcal{F}^{(3)}_3},\xi}&=\sum_{k=1}^6(-1)^{|E_{I_k}|}w_{I_k}\,\text{supp}\,\mathbbm{1}_{\overline{U}(I_k)^+}\\
	&=-\beta_{12\tilde{4}}-\beta_{\,\tilde{1}25}+\beta_{1\,\tilde{3}\,\tilde{4}}+\beta_{\,\tilde{1}\,\tilde{3}\,5}-\beta_{\,\tilde{2}\,\tilde{3}\,\tilde{4}}+\beta_{235}\\
	&=-\beta_{12\tilde{4}}+(\beta_{\,\tilde{1}\,\tilde{3}\,5}-\beta_{\,\tilde{1}25})+(\beta_{1\,\tilde{3}\,\tilde{4}}-\beta_{\,\tilde{2}\,\tilde{3}\,\tilde{4}})+\beta_{235}\\
	&=-\beta_{12\tilde{4}}+(\beta_{235}-\beta_{\,\tilde{1}235})+\beta_{12\,\tilde{3}\,\tilde{4}}\\
	&=(\beta_{12\,\tilde{3}\,\tilde{4}}-\beta_{12\tilde{4}})+\beta_{1235}\\
	&=\beta_{1235}-\beta_{123\tilde{4}}\\
	&=\beta_{12345}-\beta_{123\,\tilde{4}\,\tilde{5}}
\end{align*}
\end{samepage}
We can again see that the positive part corresponds to the usual polar operation $(\Dp)^\circ$, and the negative part corresponds to the extension $\D_\text{ext}$. 
\end{example}
\begin{example}[gDH measure for $\mathcal{F}^{(4)}_3$ using $H^{\text{trop}}_*(\Sigma)$]
	In 4 dimensions, there are now 6 vertices of the spanning polytope $\Dp$ for $\mathcal{F}^{(4)}_m$,
	\begin{align*}
		v_1=(-1,-1,-1,0) \,\, , \,\, v_2=(1,0,0,0) \,\, , \,\, v_3=(0,1,0,0) \,\, , \,\, v_4=(0,0,1,0) \,\, , \,\, v_5=(0,0,0,1),
	\end{align*}
with the sixth vertex containing the twisting parameter $v_6=(-m,-m,-m,-1)$. Out of the $6 \choose {4}$ $ =15$ possible top-dimensional cones, we choose 8 in the same fashion as the previous example. As a straightforward generalization of the calculation for $\mathcal{F}^{(2)}_m$ and $\mathcal{F}^{(3)}_m$, one can obtain 
\begin{align}
	\text{supp}\,\overline{DH}_{\D_{\mathcal{F}^{(4)}_3},\xi}=\beta_{123456}-\beta_{1234\,\tilde{5}\,\tilde{6}}
\end{align}
\end{example}

	\subsection{Calculating periods using the generalized Duistermaat-Heckman measure}
	\label{sec:gDH3}
	The Duistermaat–Heckman theorem equates the euclidean volume of $\D$ to the symplectic volume of $\mathcal{Y}_\D$ \cite[Theorem 2.10]{DHbook}.
	By definition, the volume of the Newton multi-polytope is given by the integral of the generalized Duistermaat-Heckman measure over all of $\R^{n+1}$, and this is in turn equal to $\text{Vol}(\mathcal{Y}_\D)$,
	\begin{align}
		\text{Vol}(\D)=\int_{\R^{n+1}} \overline{DH}_{\D,\xi}=\text{Vol}(\mathcal{Y}_\D).
	\end{align}
	The generalized Duistermaat-Heckman measure (as opposed to the Duistermaat-Heckman measure), allows us to do this calculation even when $\D$ is a multi-polytope.
	With careful application of the Duistermaat-Heckman theorem, we can compute the period $\pi(\check{C}^+_z)=\text{Vol}(\RLC)$ of the real Lagrangian cycle in the large radius/large complex structure limit:
	\begin{align}
		\label{BigInt}
		\int_{\check{C}^+_z}\check{\Omega}_z= \sum_{\nu,J,L}(-1)^{|J\,\text{\textbackslash} \,L|}\prod_\ell (-\log z_\ell)\int_{[0,\varepsilon]^{|J|}}\int_{\mathcal{Y}_\Delta}e^{(1-a)K^*-\sum_j b_j D_j}D_\nu\prod_j(-\log z_j) D_j\,db_j
	\end{align}
	Equation \ref{BigInt} is derived from \cite[Lemma 3.2]{Sheridan_etal} with $\omega\in c_1(\mathcal{Y}_\D)=K^*$, but here we are allowing a multi-polytope $\D$ and have taken into account the possibility of having multiple complex structure moduli $z=(z_1,\dots,z_s)\in\mathscr{M}_\text{CS}(\check{X})$. 
	Note this is necessary for the cases $\mathcal{Y}_\D=\mathcal{F}^{(n+1)}_m$.
	Let $V\subset\Dp$ be the set of vertices.
	To compute the above sum, which is over all the cones in the multi-fan $\Sigma$ of $\D$, we choose $\nu\in V$ and then sum over all $L\subset J \subset V$ with $\nu \notin J$.
	We then compute the products over $j\in J$ and $\ell \in L$.
	The geometric interpretation of the parameters $b_j$ is a small deformation of the K\"{a}hler class defined by $\D$ that takes us from the tropical limit $z_a\rightarrow 0$, where $\mathcal{A}^\text{trop}(\check{C}^{+}_z)\sim\partial\D$, to the cycle $\check{C}^+_z$.
	Lastly, since we are doing the computation at a small nonzero $z$, we need to rescale by the appropriate factor of $(-\log z_i)$ in order to get the volume of $\check{C}^{+}_z$ rather than the volume of $\mathcal{A}^z(\check{C}^{+}_z)$.
	For $n=2$ and $3$, we  expand the exponential to order $n-1$ to get an expression for $\pi(\check{C}^+_z)$ in terms of the toric divisors $D_i$ of the toric ambient space $\mathcal{Y}_\D$.
	Many terms go to $0$ either from not producing $(n+1)$ intersections or from being suppressed by $\varepsilon$.
	In what follows, Greek indices  run over all divisor classes, and the index $i$  run over independent divisor classes.
	\begin{proposition}[$\pi(\check{C}^+_z)$ for $n=2$]
    \label{prop:K3period}
    For K3 hypersurfaces, we have
			\begin{align}
				\label{K3period}
			\int_{\check{C}^+_z}\check{\Omega}_z=\frac{1}{2!}K^*\left(\sum_{i=1}^s (-\log z_i)D_i\right)^2-\frac{1}{2!}\zeta(2)K^*\sum_{\nu\neq \rho}D_\nu D_\rho
		\end{align}
	\end{proposition}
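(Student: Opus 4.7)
The plan is to specialize equation \eqref{BigInt} to the K3 case ($n=2$) and perform the sum over the triples $(\nu, J, L)$ directly, using dimensional considerations to cut the combinatorics down to a manageable size. First I would exploit that $\dim_\C \mathcal{Y}_\Delta = 3$: any nonzero intersection number on $\mathcal{Y}_\Delta$ requires exactly three toric divisors. Since the integrand already contains $D_\nu \prod_{j\in J} D_j$, contributing $1+|J|$ divisors, the exponential $e^{(1-a)K^* - \sum_j b_j D_j}$ must be Taylor-expanded to total degree $2 - |J|$ in divisor classes. Only the cases $|J| \in \{0, 1, 2\}$ can possibly survive, and after doing the $b_j$ integrals most of those contributions will drop.

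The leading $(\log z)^2$ piece comes from $|J| = 0$, which forces $L = \emptyset$ and $\prod_\ell (-\log z_\ell) = 1$. Expanding the exponential to order two gives $\tfrac{1}{2}((1-a) K^*)^2 D_\nu$, and the sum over vertices $\nu \in V$ is repackaged using the anticanonical relation $\sum_\nu D_\nu = K^*$ together with the monomial--divisor decomposition of $K^*$ into independent K\"ahler classes $D_i$ weighted by $(-\log z_i)$. This produces $\tfrac{1}{2!}\,K^* \bigl(\sum_{i=1}^s (-\log z_i) D_i\bigr)^2$, which is the first term in \eqref{K3period}.

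Next I would handle the $|J| \ge 1$ contributions, which furnish the $\zeta(2)$ correction. Following the strategy of \cite[Lem.~3.2]{Sheridan_etal}, the inclusion--exclusion signs $(-1)^{|J \setminus L|}$ conspire with the $b_j$ integrals over $[0,\varepsilon]^{|J|}$ to extract the $\zeta(2)$ coefficient. After the $b_j$ integrals, the surviving degree count forces the exponential to be expanded to order one in $K^*$ and leaves $|J| = 1$ as the effective case, so the remaining combinatorial sum ranges over ordered pairs $(\nu, \rho)$ with $\nu \neq \rho$ and yields $-\tfrac{1}{2!}\,\zeta(2)\, K^* \sum_{\nu \neq \rho} D_\nu D_\rho$ once the signs are collected.

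The main obstacle will be tracking the inclusion--exclusion signs and controlling the $b_j$ integrals in the presence of a multi-polytope $\Delta$, since for the non-Fano examples some top cones $I \in \Sigma^{(n+1)}$ carry negative weight $w_I$. I would verify that these weights threaded through \eqref{gDH} commute with the Taylor expansion so that the final cohomological identity is insensitive to the non-convexity of $\Delta$. As a consistency check, the relation $\tfrac{1}{2} K^* \sum_{\nu \neq \rho} D_\nu D_\rho = K^* c_2(\mathcal{Y}_\Delta) = \int_X c_2(X) = \chi(X)$, together with the adjunction computation of Example \ref{ChernK3}, matches the $\widehat{\Gamma}$-conjecture expansion $\widehat{\Gamma}_X = 1 - \zeta(2) c_2(X) + \cdots$ evaluated against $e^{\sum_a t_a \omega_a}$ with $t_a = -\log z_a$, confirming that \eqref{K3period} is the correct output.
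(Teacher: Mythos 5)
Your broad strategy (specialize BigInt to $n=2$, use that triple intersections on the $3$-fold ambient force the expansion degree to $2-|J|$, read off volume and $\zeta(2)$ pieces separately) is in the right spirit, but you mislocate the origin of the leading term. You claim the $\tfrac{1}{2!}K^*\bigl(\sum_i(-\log z_i)D_i\bigr)^2$ piece comes from $|J|=0$ and propose to recover the $(-\log z_i)$ weights via a "monomial-divisor decomposition of $K^*$." No such decomposition exists: $K^*=\sum_\nu D_\nu$ is a fixed class with unit coefficients, and the only place the modulus-dependent weights $(-\log z_j)$ enter BigInt is through the factors $\prod_{j\in J}(-\log z_j)D_j$, which vanish when $J=\emptyset$. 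If $|J|=0$ there is also no $b_j$ integral and no $a(b)$ coordinate, so the $|J|=0$ piece would only produce the modulus-independent constant $\tfrac12\int_{\mathcal{Y}_\D}(K^*)^3$, which is not the sought volume term. The paper's proof instead gets the codimension-$0$ (volume) contribution from $|J|=1$ (the $-b_jD_j$ term of the exponential, producing $(-\log z_j)^2D_\nu D_j^2$ after the $b$-integral) together with $|J|=2$ (the constant term, producing $(-\log z_{j_1})(-\log z_{j_2})D_\nu D_{j_1}D_{j_2}$), then reindexes using $K^*=\sum_\nu D_\nu$.

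Your treatment of the $\zeta(2)$ term (from $|J|=1$ at first order in the exponential) agrees in outline with the paper, but you gloss over the mechanism: the crucial ingredient is the tropical local coordinate $a(b)=-\log_z(1+z^b)$ and the identity $\lim_{\varepsilon\to 0}\int_0^\varepsilon a(b)\,db=-\tfrac{\zeta(2)}{2}(-\log z)^{-2}$, which converts the Duistermaat-Heckman volume of the open cover sets $E_{\nu,j}(b)$ into the $-\tfrac{1}{2!}\zeta(2)\sum_{\nu\neq\rho}K^*D_\nu D_\rho$ coefficient. Without this integral, the claim that "inclusion-exclusion signs conspire to extract $\zeta(2)$" is not substantiated. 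You also omit that the codimension-$1$ contributions vanish precisely because $c_1(X)=0$, which is the reason only the codim-$0$ and codim-$2$ strata appear. The consistency check at the end is welcome but note that $\tfrac12\sum_{\nu\neq\rho}D_\nu D_\rho$ equals $c_2(\mathcal{Y}_\D)$ restricted to $X$, so the more direct route is $\chi=\int_{\mathcal{Y}_\D}c_1c_2$ rather than identifying it with $K^*c_2(\mathcal{Y}_\D)$ as a single product.
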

\begin{proof}
	To first order, the exponential can be written
	\begin{align}
		e^{(1-a)K^*-\sum_j b_j D_j}=1+(1-a)K^*-\sum_j b_j D_j.
	\end{align}
The only terms that  survive the product over $j\in J$ are terms where $|J|=1$ or $|J|=2$ since these  give triple intersection numbers and $\text{dim}_\C \mathcal{Y}_\D=n+1=3$. Here $a$ is a local coordinate
\begin{align}
	a(b)=-\log_z(1+z^{b}).
\end{align}
This form of the coordinate $a$ comes from the tropical decomposition of $\check{C}^+_z$ by Abouzaid-Ganatra-Iritani-Sheridan. 
The right hand side of Equation \ref{BigInt} is 0 unless the corresponding faces intersect $\{\beta_\nu(y)=0\}\cap\{\beta_j(y)=0\}_{j\in J}\neq\emptyset$. On such an intersection, the term $z^{\beta_\nu}=z^{a}$ in the superpotential $W_z^\text{trop}$,  give the dominant contribution. 
The terms $\{z^{\beta_j}\}_{j\in J}=\{z^{b_j}\}_{j\in J}$  give the subdominant contributions. Ignoring all other terms and rearranging $W_z^\text{trop}$, we obtain the local coordinate $a$ on $\mathcal{A}^z(\check{C}^+_z)$, since $W_z^\text{trop}$ is the defining equation of the amoeba.
We  need to integrate $a(b)$ over small perturbations $b\in[0,\varepsilon]$, so the following calculation  be useful
\begin{align}
	\label{aInt}
\lim_{\varepsilon\rightarrow0}\int_0^\varepsilon a(b)\,db=-\frac{1}{(-\log z)^2}\frac{\zeta(2)}{2}
\end{align}

	Define an open cover $\{E_\nu,J(b)\}$ of $\mathcal{A}^z(\check{C}^+_z)$ projected to the $(a,b)$ plane by
\begin{align}
	E_{\nu,J} (b)=\left\{y\in\R^{n+1} \,\bigg| \, z^{\beta_\nu(y)}+\sum_{j\in J} z^{\beta_j(y)}=1 \right\}, 
\end{align}
so that by the Duistermaat-Heckman theorem, the volume of the open sets with $|J|=1$ is
\begin{align}
	\text{Vol}(E_{\nu,j}(b))=\int_X (-a)D_\nu D_j =\int_{\mathcal{Y}_\D}\log_z(1+z^b)K^*D_\nu D_j
\end{align}
Then using Equation \ref{aInt}, the $\zeta(2)$ term in $\pi(\RLC)$ is
\begin{align}
\int_{\RLC|_\text{codim-2}}\check{\Omega}_z&=\sum_{\nu\neq j}(-\log z_j)^2\int_0^\varepsilon \text{Vol}(E_{\nu, j}(b))\, db\\
&=-\frac{1}{2!}\zeta(2)\sum_{\nu\neq\rho}K^*D_\nu D_\rho. \nn
\end{align}
The appropriate $(-\log z_j)^2$ gets canceled by $\int a(b)\, db$ which proves the $\zeta(2)$ term in the Proposition.
Such a contribution comes from the codimension-2 stratum (i.e. singular points) of the base $B$ of the SYZ fibration. We  discuss this more in the next section.
 The codimension-1 contributions vanish due to $c_1(X)=0$.
 The codimension-0 contributions that  survive the integral over $b$ are the ones proportional to $\varepsilon^2(\log z)^2$.
Terms of this form can have $|J|=2$ or $|J|=1$, and they are enumerated below. 
\begin{align}
	\int_{\RLC|_\text{codim-0}}\check{\Omega}_z&=\varepsilon^2\left(\sum_{\nu\neq j_1\neq j_2}(-\log z_{j_1})(-\log z_{j_2})D_\nu D_{j_1}D_{j_2}-\frac{1}{2}\sum_{\nu\neq j} (-\log z_j)^2D_\nu D_{j}^2\right) \nn \\
	&=\frac{1}{2!}K^*\left(\sum_{i=1}^s (-\log z_i)D_i\right)^2
\end{align}
We have rewritten $K^*=\sum_\nu D_\nu$ and reindexed the sum to give the volume term in the Proposition.
\end{proof}

	\begin{proposition}[$\pi(\check{C}^+_z)$ for $n=3$] 
    \label{prop:CY3Period}
    For Calabi-Yau 3-fold hypersurfaces, we have
	\begin{align}
		\label{CY3Period}
		\int_{\check{C}^+_z}\check{\Omega}_z=-\zeta(3)&\left(\frac{1}{3!}K^*\sum_{\nu\neq\rho_1\neq\rho_2}D_\nu D_{\rho_1} D_{\rho_2}-\frac{1}{2!}(K^*)^2\sum_{\nu\neq\rho}D_\nu D_\rho\right)\\
		&-\frac{1}{2!}\zeta(2)(K^*)^2\sum_{\nu\neq\rho}(-\log z_{i(\rho)})D_\nu D_\rho+\frac{1}{3!}K^*\left(\sum_{i=1}^s (-\log z_i)D_i\right)^3\nonumber.
	\end{align}
\end{proposition}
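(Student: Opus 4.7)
The plan is to parallel the proof of Proposition~2.1, expanding the exponential in Equation~(3.3) one order further to account for the fact that $\dim_{\mathbb{C}} \mathcal{Y}_\D = 4$ here. The insertion $D_\nu \prod_{j\in J} D_j$ has degree $|J|+1$, so the exponential must contribute divisor forms of degree $3-|J|$ to produce a nonzero top intersection on $\mathcal{Y}_\D$. I would therefore organize the calculation by the codimension $|J|$ of the tropical cell, in direct parallel with the stratification of the SYZ base $B\cong\partial\D$.

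For the codimension-0 stratum ($|J|=0$), the cubic term in $(1-a)K^*$ is forced and the $b_j$-integrals are absent, so only the $\varepsilon^3(-\log z)^3$ piece survives; after expanding $K^*=\sum_\nu D_\nu$ and reindexing over independent classes this yields the volume term $\tfrac{1}{3!}K^*(\sum_i (-\log z_i)D_i)^3$. The codimension-1 stratum vanishes by $c_1(X)=0$, exactly as in the K3 case. The codimension-2 stratum ($|J|=2$) is handled by expanding the exponential to second order: the $a$-dependent piece is integrated against two parameters $b_j$ via Equation~(3.4), and one extra factor of $K^*$ emerges from the linear expansion, producing the $\tfrac{1}{2!}\zeta(2)(K^*)^2(-\log z_{i(\rho)})D_\nu D_\rho$ contribution. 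The two $\zeta(3)$ pieces both come from the codimension-3 stratum ($|J|=3$): the zeroth-order term in the exponential, paired with the quadruple intersection $D_\nu D_{j_1}D_{j_2}D_{j_3}$, produces $\zeta(3)K^*D_\nu D_{\rho_1}D_{\rho_2}$ once one divisor class is absorbed into $K^*$; the linear $(1-a)K^*$ term instead yields $\zeta(3)(K^*)^2D_\nu D_\rho$, with the relative sign between the two $\zeta(3)$ pieces fixed by the $(-1)^{|J\setminus L|}$ alternation in Equation~(3.3).

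The main technical obstacle is the triple-integral analog of Equation~(3.4) that is needed to identify the $\zeta(3)$ coefficient cleanly. Concretely, one needs
\begin{align*}
\lim_{\varepsilon\to 0}\int_0^\varepsilon\!\!\int_0^\varepsilon\!\!\int_0^\varepsilon a(b_1,b_2,b_3)\, db_1\, db_2\, db_3 \;\sim\; \frac{c_0\,\zeta(3)}{(-\log z)^4},
\end{align*}
where $a(b_1,b_2,b_3)=-\log_z(1+z^{b_1}+z^{b_2}+z^{b_3})$ is the local coordinate on the codimension-3 stratum of $\mathcal{A}^z(\check{C}^+_z)$ cut out by the dominant monomials of $W_z^{\mathrm{trop}}$, and $c_0$ is an explicit combinatorial constant. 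This should follow by expanding $-\log(1+u)$ as a power series in the $z^{b_i}$, interchanging the finite $b$-integrals with the sum, and recognizing the resulting iterated series as $\zeta(3)$ via standard Euler-sum identities, in direct analogy with how Equation~(3.4) produces $\zeta(2)$. Once this analytic input and the sign bookkeeping from $(-1)^{|J\setminus L|}$ are in place, applying the generalized Duistermaat–Heckman theorem via $\overline{DH}_{\D,\xi}$ from Section~3.2 — rather than the ordinary DH theorem used in the Fano setting of Abouzaid-Ganatra-Iritani-Sheridan — identifies the signed Euclidean volumes on the multi-polytope $\D$ with the intersection numbers on the non-Fano $\mathcal{Y}_\D$, yielding the stated formula.
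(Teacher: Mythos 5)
Your overall framework (expand the exponential in Equation~\ref{BigInt} to one order higher than in the K3 case, organize by SYZ stratum, and invoke the generalized Duistermaat-Heckman theorem to turn signed volumes into intersection numbers) is the right one. But the central bookkeeping step --- assigning $|J|$ values to each term of the period --- is wrong, and this cascades into proposing the wrong analytic input. You take ``codim-$k$ stratum'' to mean $|J|=k$, but these are not the same thing, and the degree constraint $\deg(D_\nu\prod_{j\in J}D_j) + \deg(\text{exponential contribution}) = n+1 = 4$ forces a different assignment. Concretely: for the volume term $\tfrac{1}{3!}K^*(\sum_i(-\log z_i)D_i)^3$, which carries $(-\log z)^3$, you need three $(-\log z_j)$ factors from $\prod_{j\in J}(-\log z_j)D_j$, hence $|J|=3$, paired with the \emph{constant} term of the exponential. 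Your choice $|J|=0$ produces no $(-\log z)$ factors at all (and with $J=\emptyset$ one has $a=0$, so $((1-a)K^*)^3$ is just $(K^*)^3$ with no $\zeta$-value and no logs); it also cannot produce the $\varepsilon^3$ you invoke, since there are no $b_j$ integrals when $J=\emptyset$.

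The $\zeta(3)$ assignment has the same problem. You place both $\zeta(3)$ pieces at $|J|=3$, but then $D_\nu D_{j_1}D_{j_2}D_{j_3}$ already has top degree $4$, so the exponential can only contribute its constant term --- which carries no $a$ and hence no $\zeta(3)$ --- and multiplying by the linear $(1-a)K^*$ gives degree $5$, which vanishes. Moreover, absorbing one $D_{j}$ into $K^*$ does not work because the restricted sum $\sum_{j_3\neq\nu,j_1,j_2}D_{j_3}$ is not $K^*$. The paper instead obtains the $K^*\sum_{\nu\neq\rho_1\neq\rho_2}D_\nu D_{\rho_1}D_{\rho_2}$ piece from $|J|=2$ paired with the linear $aK^*$ term (giving a \emph{double} integral of $a(b_1,b_2)=\log_z(1+z^{-b_1}+z^{-b_2})$), and the $(K^*)^2\sum_{\nu\neq\rho}D_\nu D_\rho$ piece from $|J|=1$ paired with the quadratic $a^2(K^*)^2$ term (giving a \emph{single} integral of $a^2(b)$). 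These are the two $\zeta(3)$-producing integrals $I^{\mathrm{I}}_{\zeta(3)}=(-\log z)^3\int_0^\varepsilon a^2(b)\,db=\zeta(3)$ and $I^{\mathrm{II}}_{\zeta(3)}=(-\log z)^3\int_0^\varepsilon\!\int_0^\varepsilon a(b_1,b_2)\,db_1\,db_2=-\zeta(3)$. The triple integral $\int\!\!\int\!\!\int a(b_1,b_2,b_3)\,db_1db_2db_3$ you propose as the key analytic input never appears: there is no room for an exponential factor when $|J|=3$, so that integral would only multiply intersection numbers already of top degree and would contribute to $\varepsilon$-suppressed terms, not to the $\zeta(3)$ coefficient. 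Before the computation can go through you would need to redo the degree-and-log bookkeeping term by term, at which point you will recover the paper's two integrals (over one and two $b$'s) as the actual sources of $\zeta(3)$.
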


\begin{proof}
	We proceed in a similar fashion. To second order, the exponential can be written
	\begin{align}
		e^{(1-a)K^*-\sum_j b_j D_j}=1+(1-a)K^*-\sum_j b_j D_j+\left((1-a)K^*-\sum_j b_j D_j\right)^2
	\end{align}
so there now 5 different types of terms that can give us quadruple intersection numbers in Equation \ref{BigInt}, however terms of the form $-baD_\nu D_j^2$ are supressed by $\varepsilon$.
We still have the coordinate $a(b)$ in the first order part of the expansion, and now when $|J|=2$ we have a new local coordinate $a(b_1,b_2)$ of the same form
\begin{align}
	a(b_1,b_2)=\log_z(1+z^{-b_1}+z^{-b_2})
\end{align} 
Therefore, the following integrals appear as part of the expansion
\begin{align}
	\label{1zeta3}
	I^{\text{\rom{1}}}_{\zeta(3)}&=(-\log z)^3\lim_{\varepsilon\rightarrow 0} \int_0^\varepsilon a^2(b)\,db=\zeta(3) \\
	\label{2zeta3}
	I^{\text{\rom{2}}}_{\zeta(3)}&= (-\log z)^3 \lim_{\varepsilon\rightarrow 0} \int_0^\varepsilon \int_0^\varepsilon a(b_1,b_2)\,db_1\,db_2=-\zeta(3)
\end{align}
The contributions to $\pi(\RLC)$ that are proportional to $\zeta(3)$ in the expansion of the exponential are given by
\begin{align*}
	\int_{\RLC|_\text{codim-3}}\check{\Omega}_z&=I^{\text{\rom{2}}}_{\zeta(3)} K^*\sum_{\nu\neq j_1\neq j_2}D_\nu D_{j_1} D_{j_2}+\frac{1}{2}I^{\text{\rom{1}}}_{\zeta(3)}(K^*)^2\sum_{\nu\neq j}D_\nu D_j \\
	&=-\zeta(3)\left(\frac{1}{3!}K^*\sum_{\nu\neq\rho_1\neq\rho_2}D_\nu D_{\rho_1} D_{\rho_2}-\frac{1}{2!}(K^*)^2\sum_{\nu\neq\rho}D_\nu D_\rho\right)
\end{align*}
The contributions that are proportional to $\zeta(2)$ are given by
\begin{align}
	\int_{\RLC|_\text{codim-2}}\check{\Omega}_z&=-I_{\zeta(2)}(-\log z_{\ell_1})(-\log z_{\ell_2})(K^*)^2\sum_{\nu\neq j}D_\nu D_j(-\log z_j) \nn\\
	&=-\frac{1}{2!}\zeta(2)(K^*)^2\sum_{\nu\neq\rho}(-\log z_{i(\rho)})D_\nu D_\rho.
\end{align}
Here, the factors of $(-\log z_\ell)$ out front cancel the corresponding factors from $I_{\zeta(2)}$. Finally, the contributions that are proportional to $(-\log z_j)^3$ hit the constant term of the exponential expansion. They have $|J|=3$ and are of the form
\begin{align}
	\int_{\RLC|_\text{codim-0}}\check{\Omega}_z&=\sum_{\nu\neq j_1\neq j_2\neq j_3} D_\nu (-\log z_{j_1})(-\log z_{j_2})(-\log z_{j_3}) D_{j_1}D_{j_2}D_{j_3}\\
	&=\frac{1}{3!}K^*\left(\sum_{i=1}^s (-\log z_i)D_i\right)^3.\nn
\end{align}
\end{proof}

\begin{remark}
	By using the adjunction formula, one can express the right hand side of Equations \ref{K3period} and \ref{CY3Period} in terms of the Chern classes of the ambient space. To do this recall from Example \ref{ChernK3},
	\begin{align}
		c(X)=\frac{c(\mathcal{Y}_\D)}{c(NX)}\bigg|_X=\frac{1+c_1+\dots+c_{n+1}}{1+c_1}\bigg|_X
	\end{align}
	where the Chern classes $c_i$ refer to the ambient space. Since the total Chern class is given by the product $c(\mathcal{Y}_\D)=\prod_\rho (1+D_\rho)$, we can express $c_2(X)$ or $c_3(X)$ for K3s and Calabi-Yau 3-folds, respectively, in terms of the toric divisors:
	\begin{align}
		c_2(X)&=c_2-c_1^2= \sum_{\nu,\rho}D_\nu D_\rho-\left(\sum_\nu D_\nu\right)^2=\frac{1}{2!}\sum_{\nu\neq\rho}D_\nu D_\rho\\ 
		c_3(X)&=c_3-c_1c_2=\frac{1}{3!}\sum_{\nu\neq\rho_1\neq\rho_2}D_\nu D_{\rho_1} D_{\rho_2}-\frac{1}{2!}K^*\sum_{\nu\neq\rho}D_\nu D_\rho
	\end{align}
	Again pulling the computation back to the ambient space
	\begin{align}
		\chi=\int_X \, c_n(X)=\int_{\mathcal{Y}_\D} c_1(\mathcal{Y}_\D)c_n(X),
	\end{align}
	for K3s we have
	\begin{align}
		\int_{\check{C}^+_z}\check{\Omega}_z=\text{Vol}(X)-\zeta(2)\chi
	\end{align}
and for Calabi-Yau 3-folds we have
\begin{align}
	\int_{\check{C}^+_z}\check{\Omega}_z=\text{Vol}(X)-\zeta(2)\sum_{i=1}^s t_i\int_X \omega_i\wedge c_2(X)-\zeta(3)\chi
\end{align}
where we have used the mirror map $t_i=-\log z_i$ in the large radius/large complex structure limit.
Now one can easily see we have a direct match with the A-side of the $\widehat{\Gamma}$-conjecture, given by the left hand side of \ref{GammaEqn}.
\end{remark}

\begin{example}[$\pi(\check{C}^+_z)$ for $\mathcal{F}^{(3)}_m$]
	\label{HirzK3Period}
	Starting from Equation \ref{K3period}, we can utilize the Duistermaat-Heckman theorem to compute $\pi(\check{C}^+_z)$ for $\mathcal{F}^{(3)}_m$. We  see that our result matches with what we found using classical intersection theory in Example \ref{ChernK3}. The Duistermaat-Heckman theorem tells us that the intersection numbers
	\begin{align}
		K_{\nu\rho}=K^*D_\nu D_\rho=\begin{bmatrix}
			2(1-2m) & 2-m & 2-m & & 3 & &3 \\
			2-m & 2(1+m) & 2(1+m) & & 3 & & 3 \\
			2-m & 2(1+m) & 2(1+m) & & 3 & &3 \\
			3 & 3 & 3 & & 0 & & 0 \\
			3 & 3 & 3  & & 0 & & 0 \\
		\end{bmatrix}
	\end{align}
 are exactly the degree or ``length" of the edge $\D$ corresponding to $\beta_{\overline{\nu\rho}}=\{\beta_\nu(y)=0\}\cap\{\beta_\rho(y)=0\}$. When $m>2$ we can see that the degrees of the $\beta_{\overline{12}}$ and $\beta_{\overline{13}}$ edges become negative. The sum which computes the $\zeta(2)$ term of the period integral is
 \begin{align}
 	\widetilde{\text{Tr}}(K)=\frac{1}{2!}\sum_{\nu\neq\rho} K_{\nu\rho}
 \end{align}
In general we  use the tilde trace to denote the sum of all entries of a tensor where no two indices are equal, with a combinatorial factor out front.
\begin{align}
	\widetilde{\text{Tr}}(A_{\mu_1\dots\mu_r})=\frac{1}{r!}\sum_{\mu_1\neq\dots\neq\mu_r} A_{\mu_1\dots\mu_r}
\end{align} 
  For the $m$-twisted Hirzebruch 3-fold this becomes
\begin{align}
	\widetilde{\text{Tr}}(K) &=((2-m)+3)+(2(1+m)+(2-m)+3)+3(3)\nn\\
	&=2(2-m)+2(1+m)+6(3)\nn\\
	&=24
\end{align}
The Duistermaat-Heckman theorem has allowed us to use the multi-polytope $\D$ (which we constructed as $\text{supp}\overline{DH}_{\D,\xi}$ in Example \ref{3dF3gDH}) to compute a quantity associated to the anticanonical hypersurface $X\hookrightarrow\mathcal{Y}_\D$. 
Mirror symmetry of $X$
Similar to Example \ref{MirrorHirzBeta} where we had the 12 theorem, equivalent to $\int_{\mathcal{Y}_\D} \text{Td}_2(\mathcal{Y}_\D)=1$, we have the ``24 theorem"
\begin{align}
	\int_{\mathcal{Y}_\D} \text{Td}_3(\mathcal{Y}_\D)=\frac{1}{24}\int_{\mathcal{Y}_\D}c_1(\mathcal{Y}_\D)c_2(\mathcal{Y}_\D)=1.
\end{align}
For completeness, we also compute $\text{Vol}(X)$
\begin{align}
	\text{Vol}(X)=\frac{1}{2!}K^*\left(\sum_{i=1}^2 (-\log z_i)D_i\right)^2=(1+m)t_1^2+3t_1t_2
\end{align}
so that the period is
\begin{align}
	\int_{\RLC} \check{\Omega}_z = (1+m)t_1^2+3t_1 t_2-24\zeta(2)
\end{align}
\end{example} 

\begin{example}[$\pi(\check{C}^+_z)$ for $\mathcal{F}^{(4)}_3$]
	\label{HirzCY3Period} 
	In 4 dimensions, the degree of an edge  be given by a quadruple intersection number. Namely we  have two types of contributions 
\begin{align}
	K^{\text{\rom{1}}}_{\nu\rho\delta}&=K^*D_\nu D_\rho D_\delta\\
	K^{\text{\rom{2}}}_{\nu\rho}&=(K^*)^2D_\nu D_\rho
\end{align}
which we  call Type \rom{1} and Type \rom{2} respectively. Then from Equation \ref{CY3Period}, we can see the $\chi$ term can be obtained by subtracting the number of the Type \rom{2} contributions from the number of Type \rom{1} contributions.
\begin{align}
	\chi=\widetilde{\text{Tr}}(K^{\text{\rom{1}}})-\widetilde{\text{Tr}} (K^{\text{\rom{2}}})
\end{align}
For the $m$-twisted Hirzebruch 4-fold, we have
\begin{align}
	\label{TrK1}
	\widetilde{\text{Tr}}(K^{\text{\rom{1}}})&=(4D_1+(2-m)D_2)(4D_1^3+(12-3m)D_1^2D_2)=56
\end{align}
and
\begin{align}
	\label{TrK2}
	\widetilde{\text{Tr}}(K^{\text{\rom{2}}})&=(4D_1+(2-m)D_2)^2(6D_1^2+(8-3m)D_1D_2)=224
\end{align}
so that
\begin{align*}
	\chi=\widetilde{\text{Tr}}(K^{\text{\rom{1}}})-\widetilde{\text{Tr}} (K^{\text{\rom{2}}})=-168
\end{align*}
As we can see, there are 56 Type \rom{1} contributions and 224 Type \rom{2} contributions.
In terms of Chern classes, we have
\begin{align}
	\widetilde{\text{Tr}}(K^{\text{\rom{1}}})&=\int_{\mathcal{Y}_\D}c_1c_3\\
	\widetilde{\text{Tr}}(K^{\text{\rom{2}}})&=\int_{\mathcal{Y}_\D}c_1^2c_2
\end{align}
The $\zeta(2)$ term and the volume are also easily computed to give the period
\begin{align}
	\int_{\check{C}^+_z}\check{\Omega}_z=\frac{1}{3!}(2+3m)t_1^3+2t_1^2t_2-\zeta(2)((6m+44)t_1+24t_2)-\zeta(3)(-168)
\end{align}

\end{example}

	\section{Error in tropicalization}
	\label{sec:EIT}
	The $\chi$ term in $\pi(\check{C}^+_z)$ has an interpretation in terms of ``error in tropicalization". The period integral is done by approaching the large complex structure limit point $z=0$ of $\mathscr{M}_\text{CS}(\check{X})$ and then pulling back the computation to a small yet non-zero $z$. Therefore, there is a difference between integrating over $\overline{DH}_{\D,\xi}$ and the corresponding region of $\mathcal{A}^z({\check{C}^+_z})$.
	This difference is called error in tropicalization because in the true tropical limit $\mathcal{A}^\text{trop}({\check{C}^+_z})$ converges to $\partial\D$ and there would be no difference if we simply computed the volume of $\partial\D$.
	Following \cite{Sheridan_etal,Yamamoto22}, we  show how to get local $\zeta(n)$ contributions to $\pi(\check{C}^+_z)$ via error in tropicalization for $n=2$ and $n=3$.
	We  then comment on how new features of the error in tropicalization calculation arise when considering a non-Fano toric ambient space.
	With this alternate perspective, we can obtain the same Euler characteristic $\chi$ as we did using the generalized Duistermaat-Heckman measure.
	\subsection{Local contributions to the period} 
	Maslov dequantization takes us from logarithmic addition $x\oplus_z y = -\log_z(z^{-x}+z^{-y})$ to tropical addition $x\oplus_0 y = \min\{x,y\}$, so it is interesting to compute the difference between these two quantities. For simplicity, let's set $y=0$.
	\begin{align}
		\label{ErrorTrop}
		\int_\R \left(x\oplus_z y-x\oplus_0 y\right)\,dx&=-\int_\R\left(\log_z(z^{-x}+1)+\min\{x,0\}\right)\,dx\\
		&=-2\int_{\R_+}\log_z(z^{-x}+1)\,dx \nn\\
		&=-\frac{2}{(\log z)^2}\lim_{A\rightarrow \infty}\int^{z^{-A}}_1\frac{\log(1+b)}{b}\,db \nn\\
		&=-\zeta(2)/(\log z)^2 \nn
	\end{align}
	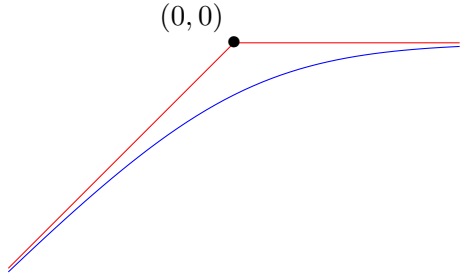
\begin{figure}[h!]
		\begin{center}
			\begin{tikzpicture}
				\draw[domain=-3:3, smooth, variable=\x, blue] plot ({\x}, {-ln(1+exp(-\x)});
				\draw[red] (3,0) -- (0,0);
				\draw[red] (-3,-3) -- (0,0);
				\node at (0,0) {$\bullet$};
				\node[anchor=south east] at (0,0) {$(0,0)$};
			\end{tikzpicture}
		\end{center}
	\caption{Logarithmic addition $x\oplus_z 0$ is plotted in blue and tropical addition $x\oplus_0 0$ is plotted in red for $x\in[-3,3]$ and $z=e$. For small $z$, the area bounded by these curves is approximately $\zeta(2)/(\log z)^2$}
	\end{figure}
	In the case of our period integral $\pi(\check{C}^+_z)$, we get a contribution of $-\zeta(2)$ for each such error in tropicalization integral from the factor of $(\log z)^2$ in Equation \ref{BigInt}.
	\par The contributions to $\pi(\check{C}^+_z)$ as in Equation \ref{ErrorTrop} appear at each singularity in the affine structure of the common base $B\cong\partial\D$ of the SYZ fibration, (see Diagram \ref{SYZdigaram}). 
	The affine structure can be constructed from rays $\R_+ v$ emanating from the origin and going through integral points $v$ of $\partial\D\sim\mathcal{A}^\text{trop}(\RLC)$.
	Along each edge of $\D$ connecting two adjacent integral points $v_i,v_j$, there is a singularity in the affine structure from an affine linear $\text{Aff}(\R^{n+1})=\R^{n+1}\rtimes\text{GL}_{n+1}(\R)$ change of coordinates from the coordinate system defined by $\R_+ v_i$ to the coordinate system defined by $\R_+ v_j$.
	The local form of the SYZ fibration $\check{p}:\check{X}_z\rightarrow B$ determines the type of contribution to $\pi(\RLC)$.
	This is because the local form of $\check{p}$ tells us how the amoebas $\mathcal{A}^z(\RLC)$ approach the tropical limit, and thus determines the form of the error in tropicalization integral.
	For a K3 surface, there are $\zeta(2)$ contributions of the form
	\begin{align}
		I_{\zeta(2)}=(-\log z)^2\int_0^\varepsilon \left(b\oplus_z 0-b\oplus_0 0\right)\,db=-\zeta(2)
	\end{align}
	For a Calabi-Yau 3-fold, there can be $I_{\zeta(2)}$ contributions as well as two different types of $\zeta(3)$ contributions,
	\begin{align}
		I^{\text{\rom{1}}}_{\zeta(3)}&=(-\log z)^3\int_0^\varepsilon  \left((b\oplus_z 0)^2-(b\oplus_0 0)^2\right)\,db=\zeta(3)\\
		I^{\text{\rom{2}}}_{\zeta(3)}&=(-\log z)^3 \int_0^\varepsilon \int_0^\varepsilon \left(b_1\oplus_z b_2 \oplus_z 0 - b_1\oplus_0 b_2\oplus_0 0 \right)\, db_1\, db_2=-\zeta(3)
	\end{align}
	In Table \ref{table:1}, we tabulate the local forms of $\check{p}$ that give rise to the above contributions.
	\begin{table}[h!]
		\centering
		\begin{tabular}{||c | c||} 
			\hline
			Local $\check{p}$ & $\pi(\RLC)$ contribution \\ [0.5ex] 
			\hline\hline
			$x_1+x_2=b\oplus_z 0$ & $I_{\zeta(2)}$ \\ 
			$x_1+x_2+x_3=b\oplus_z 0$ & $I^{\text{\rom{1}}}_{\zeta(3)}$ \\
			$x_1+x_2=b_1\oplus_z b_2\oplus_z 0$ & $I^{\text{\rom{2}}}_{\zeta(3)}$ \\
			\hline
		\end{tabular}
		\caption{Local contributions to $\pi(\RLC)$ via error in tropicalization.}
		\label{table:1}
	\end{table}
	Notice that these three error in tropicalization integrals can be obtained from the local integrals which appear in the proofs of the $\pi(\RLC)$ formulas (Equations \ref{aInt}, \ref{1zeta3}, and \ref{2zeta3} respectively). Further, the Type \rom{1} and Type \rom{2} contributions to the $\chi$ term of $\pi(\check{C}^+_z)$ for $n=3$ are exactly the counts of error in tropicalization integrals $I^{\text{\rom{1}}}_{\zeta(3)}$ and $I^{\text{\rom{2}}}_{\zeta(3)}$. 
	See \cite[Example 2.2]{Sheridan_etal} for an explicit computation of the error in tropicalization integrals for $\CP^3$. In what follows, we  show that $W_z^\text{trop}$ the non-Fano $\mathcal{F}^{(n+1)}_m$ can be cast into the local forms in Table \ref{table:1} along each edge of $\D$. There  be additional features that do not appear in the previously studied Fano cases. 
	\subsection{New contributions for non-Fano toric varieties}
	We have computed $\pi(\check{C}^+_z)$ for non-Fano toric varieties using the generalized Duistermaat-Heckman measure.
	Since this calculation used a tropical decomposition of $\mathcal{A}^z(\check{C}^+_z)$ in the large radius/large complex structure limit, one would expect to be able to obtain the $\chi$ term of the period via the error in tropicalization described in the previous section.
	 This does turn out to be the case, and there are novel features in the calculation.
	 In particular, there are error in tropicalization integrals that give contributions to the period with opposite signs, due to $\D_\text{ext}$ having flipped orientation.
	 We  compute such contributions for our examples $\mathcal{F}^{(3)}_3$ and $\mathcal{F}^{(4)}_3$, but first we go back to two dimensions for illustrative purposes.
	 \begin{remark}
	 	\label{CorrRemark}
	 	Mirror Landau-Ginzburg models for Hirzebruch surfaces are studied in \cite{CPS} using tropical geometry, which reproduced Auroux's previous results using an explicit deformation $\mathcal{F}^{(2)}_m\cong\mathcal{F}^{(2)}_{m+2k}$ \cite{Aur09}. Using the basis in Example \ref{ex:Hirzebruch3d} and the intersection numbers from Example \ref{GLSMintersect}, we can explicitly write Auroux's corrected superpotential $W^A$ for $\mathcal{F}^{(2)}_3$ from \cite[Proposition 2]{Aur09}.
	 	\begin{align}
	 		W^A_z(Y_1,Y_2)=\frac{z_1^2}{Y_1}+Y_1+Y_2+\frac{z_1^6}{z_2}\frac{1}{Y_1^3Y_2}+2\frac{z_1^4}{z_2}\frac{1}{Y_1^2}+\frac{z_1^2}{z_2}\frac{Y_2}{Y_1}
	 	\end{align}
	 	The tropicalized version can also be written down
	 	\begin{align}
	 		W^{A,\text{trop}}_z(y_1,y_2)=z_1^{2-y_1}+z_1^{y_1}+z_1^{y_2}+z_1^{6-3y_1-y_2-\log_{z_1}z_2}+2z_1^{4-2y_1-\log_{z_1}z_2}+z_1^{2-y_1+y_2-\log_{z_1}z_2}
	 	\end{align}
 	In what follows, we  use the na\"{i}ve superpotential
 	\begin{align}
 		W_z^\text{trop}=z^{1-y_1}+z^{1+y_1}+z^{1+y_2}+z^{1-3y_1-y_2}
 	\end{align}
 	and the appropriate generalization for  $\mathcal{F}^{(3)}_3$ and $\mathcal{F}^{(4)}_3$. This does not give the true mirror\footnote{For $\mathcal{F}^{(2)}_3$ it has been shown that there are more critical points of $W_z$ than Lagrangian submanifolds, so $m-2$ critical points need to be sent to infinity to match the dimensions of the corresponding moduli spaces \cite{FOOO}.} $\check{X}_z=W^{-1}_z(0)$, but we  show we can still obtain the correct $\chi$ term of $\pi(\RLC)$.
	 \end{remark}
 
	\begin{example}[Error in tropicalization for $\mathcal{F}^{(2)}_3$]
		We have already addressed the non-Fano Hirzebruch surface $\mathcal{F}^{(2)}_3$ several times, but now we take a slightly different approach. In Example \ref{HirzSurfBetas}, we found that
		\begin{align}
				\text{supp}\,\overline{DH}_{\D_{\mathcal{F}^{(2)}_3},\xi}=\beta_{1234}-\beta_{12\,\tilde{3}\,\tilde{4}}
		\end{align}
		In order to get a tropical variety with a region of opposite orientation, we propose the corresponding tropical polynomial should be of the form
		\begin{align}
			\label{HirzSurfTrop}
			f(y_1,y_2)=\left(0 \oplus \bigoplus_{i=1}^4 \beta_i(y) \right )\ominus(0\oplus \beta_1(y) \oplus \beta_2(y) \oplus \beta_{\,\tilde{3}}(y) \oplus \beta_{\,\tilde{4}}(y))
		\end{align}
	where $\beta_i(y)=1+v_i\cdot y$ and the $\beta_{\, \tilde{i}}(y)$ are formal inverses added to the semi-ring of tropical polynomials $\R^\text{trop}[y_1,\dots,y_{n+1}]$. A conjectured plot of the tropical variety is shown in Figure \ref{F3TropV}. 
	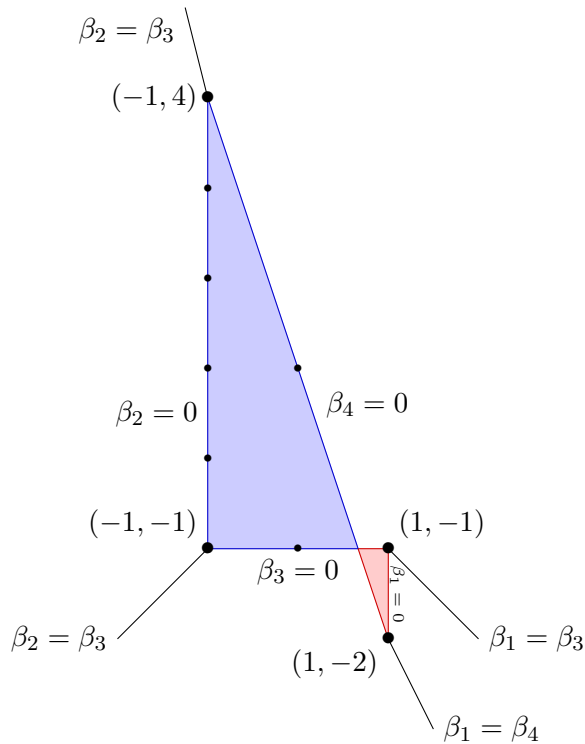
\begin{figure}[h!]
		\begin{center}
			\begin{tikzpicture}[scale=1.2]
				\draw[blue] (-1,-1)coordinate(A) -- (-1,4)coordinate(B);
				\draw[blue] (A) -- (0.666, -1)coordinate (N); 
				\draw[red] (N) -- (1,-1)coordinate(C);
				\draw[blue] (B) -- (N);
				\draw[red] (N) -- (1,-2)coordinate(D);
				\draw[red] (C) --(D);
				\draw[opacity=0.2,fill=red] (N) -- (C) -- (D) --cycle;
				\draw[opacity=0.2,fill=blue] (N) -- (B) -- (A) --cycle;
				\draw (-2,-2)coordinate(E) -- (A);
				\draw (-1.25,5)coordinate(F) -- (B);
				\draw (C) -- (2,-2)coordinate(G);
				\draw (D) -- (1.5,-3)coordinate(H);
				\node at (-1,0) {\tiny$\bullet$};
				\node at (-1,1) {\tiny$\bullet$};
				\node at (-1,2) {\tiny$\bullet$};
				\node at (-1,3) {\tiny$\bullet$};
				\node at (0,1) {\tiny$\bullet$};
				\node at (0,-1) {\tiny$\bullet$};
				\node at (A) {$\bullet$};
				\node[anchor=south east] at (A) {$(-1,-1)$};
				\node at (B) {$\bullet$};
				\node[anchor=east] at (B) {$(-1,4)$};
				\node at (C) {$\bullet$};
				\node[anchor=south west] at (C) {$(1,-1)$};
				\node at (D) {$\bullet$};
				\node[anchor=north east] at (D) {$(1,-2)$};
				\node[anchor=east] at ([yshift=1.5cm]A) {$\beta_2=0$};
				\node[anchor=north] at ([xshift=1cm]A) {$\beta_3=0$};
				\node[anchor=west] at ([xshift=1.2cm,yshift=1.6cm]A) {$\beta_4=0$};
				\node[rotate=270] at ([xshift=0.1cm, yshift=0.5cm]D) {\tiny$\beta_1=0$};
				\node[anchor=west] at (H) {$\beta_1=\beta_4$};
				\node[anchor=east] at (E) {$\beta_2=\beta_3$};
				\node[anchor=north east] at (F) {$\beta_2=\beta_3$};
				\node[anchor=west] at (G) {$\beta_1=\beta_3$};
			\end{tikzpicture}
		\end{center}
		\caption{The conjectured tropical variety $V(f)$ of the tropical polynomial defined in Equation \ref{HirzSurfTrop}. Notice the faces of the polyhedron are given by $\beta_i=0$ and the asymptotic directions are given by $\beta_i=\beta_j$ where $\beta_1=1-y_1, \beta_2=1+y_1, \beta_3=1+y_2$ and $\beta_4=1-3y_1-y_2$.}
		\label{F3TropV}
	\end{figure}
	Such a tropical variety would give rise to a multi-polytope $\D$ that we have shown satisfies the 12 theorem $\text{Vol}(\D)+\text{Vol}(\Dp)=12$. As we have seen, this euclidean volume is closely related to topological invariants of the corresponding toric ambient space $\mathcal{Y}_\D$. For elliptic curves $X$ the Euler characteristic is 0, so it does not make sense to compute $\pi(\RLC)$ as we have for K3s and Calabi-Yau 3-folds, but we can use this illustrative low-dimensional example to calculate the volume
	\begin{align}
		\text{Vol}(\D)=9(1)+1(-1)=8
	\end{align}
 	in unit triangles, which would be the leading order term in $\pi(\RLC)$. For this reason, we conjecture the above form of the tropical polynomial $f$.
		\begin{figure}[h!]
			\begin{center}
			\begin{tabular}{c c c c}
				\includegraphics[scale=0.2]{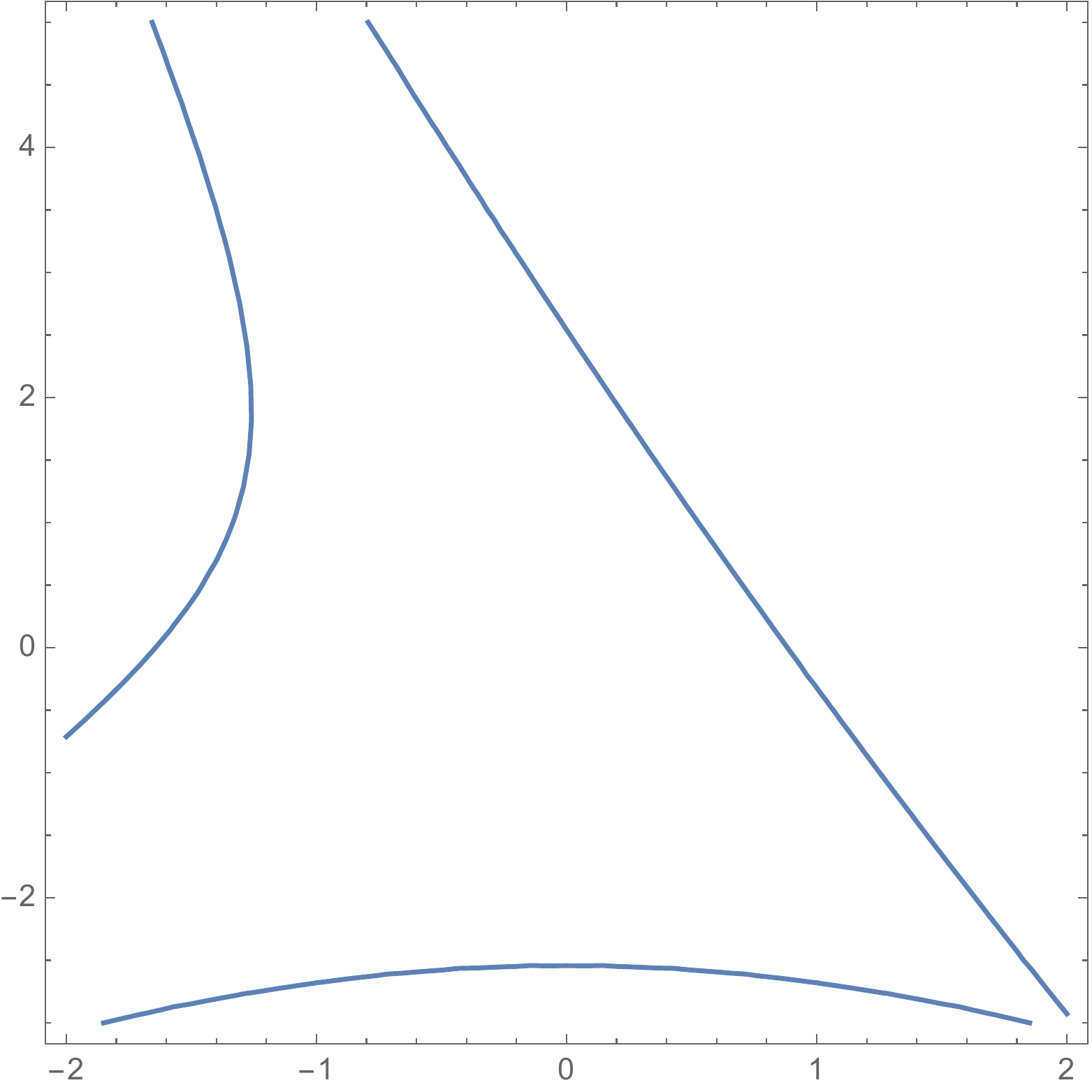}
				&
				\includegraphics[scale=0.2]{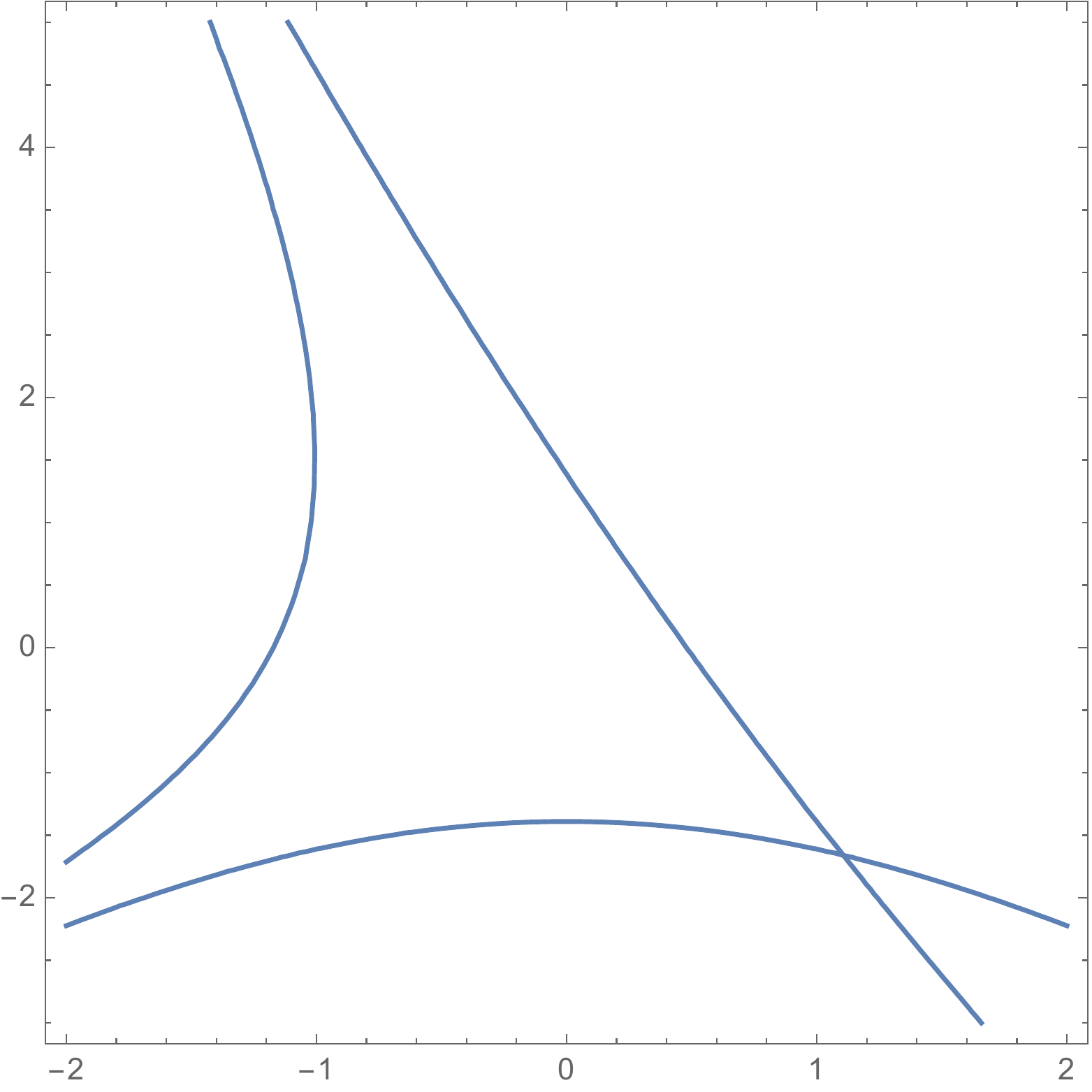}
				&
				\includegraphics[scale=0.2]{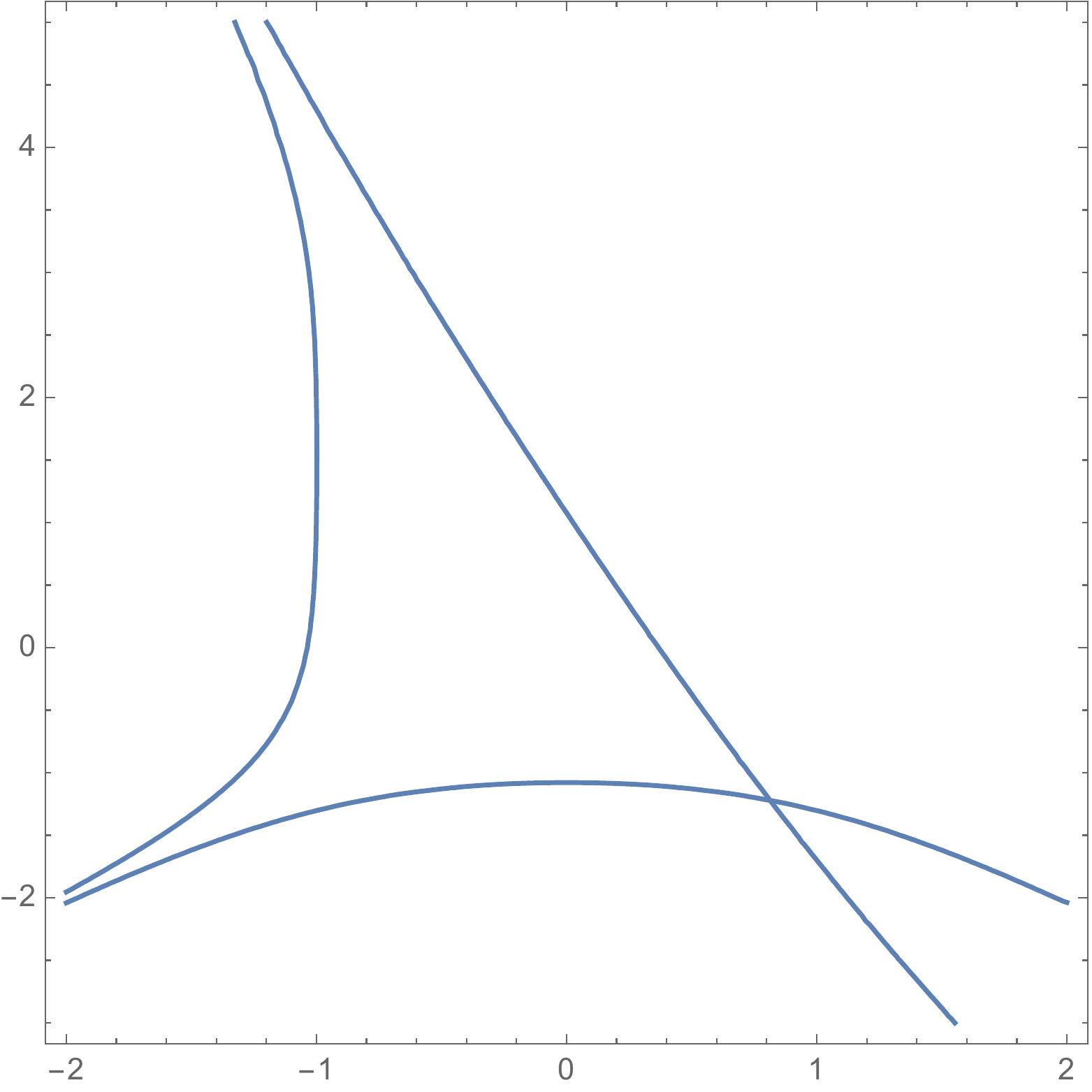}
				&
				\includegraphics[scale=0.2]{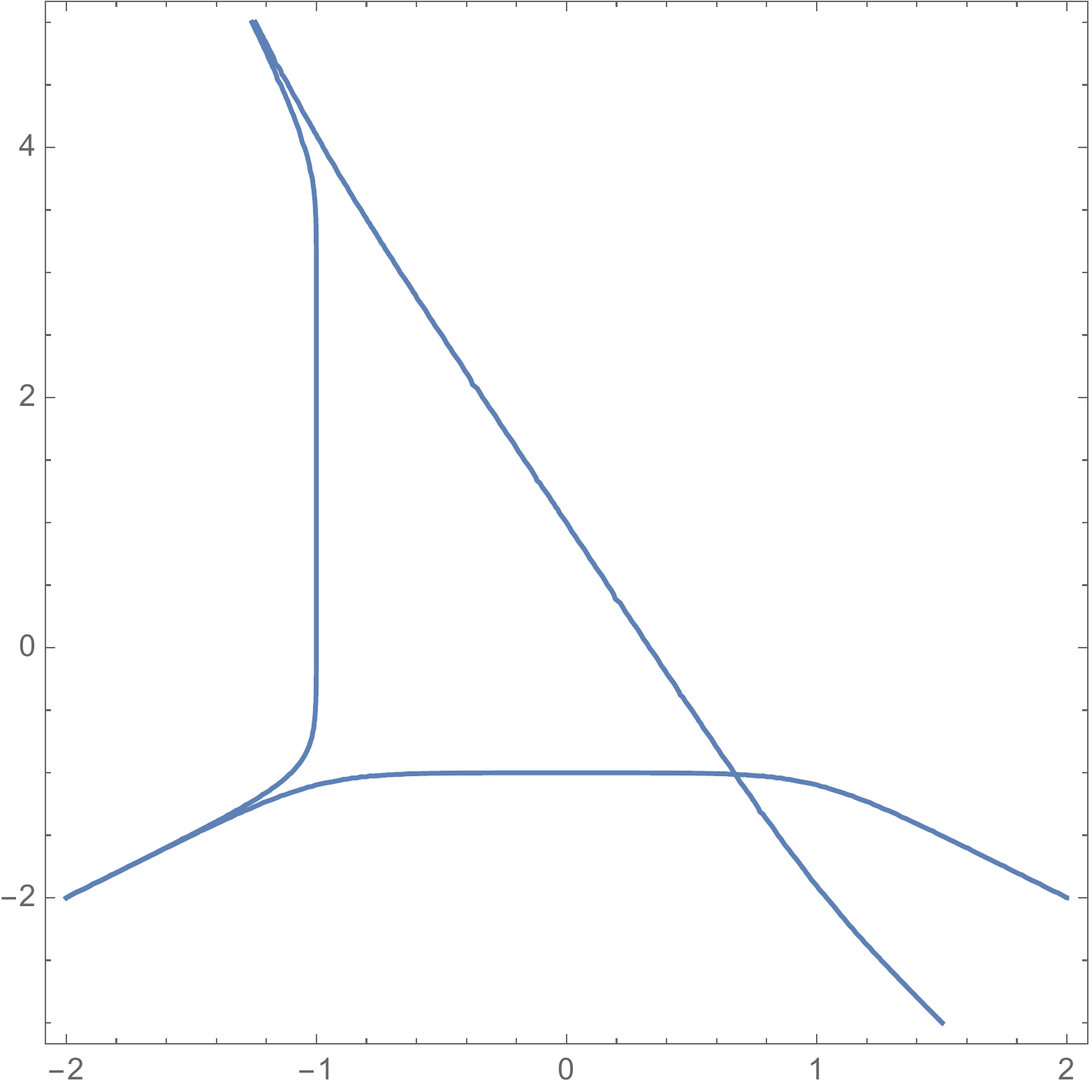}
			\end{tabular}
		\caption{The boundary of the amoeba $\mathcal{A}^z(\check{X}_z)$ for the mirror to $X\hookrightarrow\mathcal{F}^{(2)}_3$ at $z=0.6,0.3,0.1,$ and $0.001$. As we approch the tropical limit, we can see the amoeba ``twists" to give us $\D_\text{ext}$ as in Figure \ref{HirzSurfgDH}. The vertical boundary of $\D_\text{ext}$ is not present in this construction.}
		\label{fig:F3amoeba}
		\end{center}
		\end{figure}
	
	While we have showed that one can construct the multi-polytope $\D=(\Dp)^\nabla$ with the generalized Duistermaat-Heckman measure, obtaining $\mathcal{A}^\text{trop}(\RLC)\sim \partial\D$ via a tropical amoeba presents challenges. 
	In Figure \ref{fig:F3amoeba}, we can see that it is possible to manipulate the coefficients of $W_z^\text{trop}$ to obtain a ``twisted" amoeba for $\mathcal{F}^{(2)}_3$. 
	One can imagine the two dimensional amoeba as living in $\R^3$, so that the plot in the figure is non-planar. 
	Therefore the point $(2/3,-1)$ is not an intersection point of the $\beta_3$ edge with the $\beta_4$ edge, rather it is just an artifact of the projection to two dimensions.
	Without a more developed theory of the formal inverses $\beta_{\,\tilde{i}}$, it seems to be impossible to include the vertical edge of $\D_\text{ext}$ since this edge would need to intersect itself in the projection.
	Nonetheless, assuming a twisted amoeba of this nature can exist, we  now show that the $\chi$ term in $\pi(\RLC)$ can be obtained via error in tropicalization for $n=2$ and $n=3$.
	\end{example}
	
	\begin{example}[Error in tropicalization for $\mathcal{F}^{(3)}_3$]
		We have computed the $\chi$ term of $\pi(\RLC)$ for the Calabi-Yau hypersurface in $\mathcal{F}^{(3)}_3$ in Example \ref{HirzK3Period}, but here we  derive the same result in a different way. The tropicalized superpotential can be written as
		\begin{align}
			W_z^\text{trop}(y)=z^{1-y_1-y_2}+z^{1+y_1}+z^{1+y_2}+z^{1+y_3}+z^{1-3y_1-3y_2-y_3}
		\end{align}
	Along each edge of $\RLC=\{y\in\R^3 \, | \, W_z^\text{trop}(y)=1\}$, the defining equation can be cast into the form $x_1+x_2=b\oplus_z 0$.
	For example, along the segment between $(-1,-1,-1)$ and $(-1,-1,0)$ in the edge $\beta_{\overline{23}}=\{\beta_2(y)=0\}\cap\{\beta_3(y)=0\}$, we have $x_1=1+y_2+y_3$, $x_2=-y_3$, and $b=y_2-y_1$.
	If we were to count all the segments of $\Delta$, plotted in Figure \ref{3dF3Newton}, we would get 28. Since we get a contribution to the $\chi$ term of $\pi(\RLC)$ for each such segment, this would imply that the Euler characteristic of the K3 hypersurface would be 28. However, if we follow the edge $\beta_{\overline{34}}$ from $(-1,-1,-1)$ to $(2,-1,-1)$, we arrive at $\D_\text{ext}$. The vertical edge $\beta_{\overline{13}}$ from $(2,-1,-1)$ to $(2,-1,-2)$ has an orientation such that the error in tropicalization integral along that segment is given by $-I_{\zeta(2)}$. The local coordinates are
	\begin{align}
		x_1=1+y_2+y_3 \,\, , \,\, x_2=-y_3 \,\, , \,\, b=y_1+2y_2 
	\end{align}
	but we have to integrate in the opposite direction, giving the $\zeta(2)$ contribution with the opposite sign. Taking this into account, we can calculate the error in tropicalization
	\begin{align}
		(-\log z)^2\int_\varepsilon^0 \left(b\oplus_z 0-b\oplus_0 0\right)\,db=\zeta(2).
	\end{align}
 	We can apply the same process to the edge $\beta_{\overline{12}}$, so there should be two contributions with the opposite sign. Then the corresponding term of the period is then given by summing up the degrees of the edges of $\D$.
 	\begin{align}
 		\int_{\RLC|_\text{codim-2}}\check{\Omega}_z=-\zeta(2)\left(\textcolor{blue}{2(4)+6(3)}+\textcolor{red}{2(-1)}\right)=-24\zeta(2)
 	\end{align}
	\end{example}

	\begin{example}[Error in tropicalization for $\mathcal{F}^{(4)}_3$]
		In four dimensions one could put the defining equation $W_z^\text{trop}$ into the form $x_1+x_2+x_3=b\oplus_z 0$ or
		$x_1+x_2=b_1\oplus_z b_2\oplus_z 0$ along each edge of $\D$, giving contributions of $I^\text{\rom{1}}_{\zeta(3)}$ or $I^\text{\rom{2}}_{\zeta(3)}$ to $\pi(\RLC)$ respectively. Similar to the last example, the edges which lie on the hyperplane $\beta_{\overline{1}}$ defined by the VEX point $(-1,-1,-1,0)$ give a contribution with the opposite sign due the orientation of $\D_\text{ext}$ being flipped. This is also manifest in Equations \ref{TrK1} and \ref{TrK2}, where we can see that  there  be negative contributions to the counts of Type I and Type II affine singularities when $m>2$. Namely, any term that multiplies $(2-m)D_2$ in $K^*$  contribute $-I^\text{\rom{1}}_{\zeta(3)}$ or $-I^\text{\rom{2}}_{\zeta(3)}$. Counting all these up with the appropriate signs, we obtain $\chi=-168$ as before. This calculation of the degrees and counts of the edges of $\D$ is consistent with \cite[Appendix B.2]{BH16}.
	\end{example}

	\section{Conclusions}
	We have shown that one can calculate the B-side period of the mirror real Lagrangian cycle $\pi(\RLC)$ even when the A-side toric ambient space $\mathcal{Y}_\D$ is non-Fano.
	There were several novel features of the calculation from relaxing this ample condition of the canonical bundle $K$.
	The generalized Duistermaat-Heckman measure $\overline{DH}_{\D,\xi}$ was employed to construct the multi-polytope $\D$ from the data of $\Dp$, and we saw that we naturally obtain $\D=(\Dp)^\circ\cup\D_\text{ext}=(\Dp)^\nabla$, realizing the trans-polar construction of Berglund-H\"{u}bsch.
	In a novel algebraic manner, the graded ring $H_*^\text{trop}(\Sigma)$ allowed us to construct $\overline{DH}_{\D,\xi}$ in any number of dimensions.
	Through relating the tropical amoeba of the real Lagrangian cycle $\mathcal{A}^\text{trop}(\RLC)\sim\partial\D$ to the support of the generalized Duistermaat-Heckman measure $\text{supp}\overline{DH}_{\D,\xi}=\D$, we have conjectured that there must be a way to ``see" $\D_\text{ext}$ in the tropical limit.
	
	It is still not known how to use $\pi(\RLC)$ to obtain the higher order $z^k$ terms of the top period $\pi_n$ that we discussed in Section \ref{sec:MS}.
	In Equation \ref{HirzFunPeriod}, one can see there are issues using the traditional method for these non-Fano cases. 
    Given the successes of the Gross-Siebert mirror symmetry program, there should be a way to solve these problems using the tropical methods outlined in this paper.
	By homological mirror symmetry, there are cycles $C$ that are mirror to sheaves $\mathscr{F}$ other than the structure sheaf $\mathscr{O}_X$, so it is also of interest to see how $\pi(C)$ is related to the other $\pi_i$.
	In a forthcoming work \cite{wip}, we plan to similarly analyze periods of ``corrected" cycles $\check{C}^A_z=\{x\in\R_+^n \, | \, W^{A,\text{trop}}(x)=1\}$ using the corrected superpotential defined in Remark \ref{CorrRemark}. This should elucidate unanswered questions about the nature of mirror symmetry for the ambient toric variety itself. Since it is possible for $H^k(\mathcal{Y}_\D,K)\neq 0$ with $k>0$ when $K$ is not ample, we expect higher cohomology contributions to be important for future work in this direction. 
\section*{Acknowledgments}
PB thanks Tristan H\"ubsch for many collaborations and  discussions over the years, especially recently on the topic of non-Fano toric varieties and VEX polytopes. PB would also like to thank Samson Shatashvili for useful discussions on the generalized Duistermaat-Heckman measure, as well as the Hamilton Institute and Mathematics Department at Trinity College Dublin for their hospitality.
ML thanks the lecturers and fellow participants for a very inspirational time at The Physical Mathematics of QFT Summer School 2022 at UMass Amherst. 
ML would also like to thank Giorgi Butbaia for his helpful discussions about Propositions 4.2 and 4.3.
The work of PB and ML was supported by the Department of Energy under grant DE-SC0020220.
ML would also like to acknowledge the UNH graduate school for their  continued support through the 2020, 2021 and 2022 Summer Teaching Assistant Fellowships.

\appendix
\bibliographystyle{JHEP}
\bibliography{ref}

\end{document}